\definecolor{darkred}{rgb}{0.5, 0, 0}
\definecolor{darkred}{rgb}{0.5, 0, 0}
\definecolor{darkgreen}{rgb}{0, 0.5, 0}
\definecolor{darkblue}{rgb}{0,0,0.5}
\newcommand\markx[2]{}
\newcommand{\ignore}[1]{}
\newcommand{\myparagraph}[1]{\vspace{10pt}\noindent\textbf{#1}}
\definecolor{darkgreen}{rgb}{0,0.5,0}
\definecolor{lightblue}{RGB}{0,176,240}
\definecolor{darkblue}{RGB}{0,112,192}
\definecolor{lightpurple}{RGB}{124, 66, 168}
\definecolor{grey}{RGB}{139, 137, 137}
\definecolor{maroon}{RGB}{178, 34, 34}
\definecolor{green}{RGB}{34, 139, 34}
\definecolor{types}{RGB}{72, 61, 139}
\definecolor{gold}{rgb}{0.8, 0.33, 0.0}
\definecolor{darkgray}{gray}{0.3}
\newcounter{task}
    \newtheorem{thm}{Theorem}[section]      
    \newtheorem{theorem}[thm]{Theorem}
    \newtheorem{conj}[thm]{Conjecture}
    \newtheorem{conjecture}[thm]{Conjecture}
    \newtheorem{lemma}[thm]{Lemma}
    \newtheorem*{lemma*}{Lemma}
    \newtheorem{claim}[thm]{Claim}
    \newtheorem{corollary}[thm]{Corollary}
    \newtheorem{fact}[thm]{Fact}
    \newtheorem{proposition}[thm]{Proposition}
    \newtheorem{example}[thm]{Example}
    \newtheoremstyle{boxes}
    {2pt}
    {0pt}
    {}
    {}
    {\bfseries}
    {}
    {\newline}
    {\thmname{#1}\thmnumber{ #2}:  
    \thmnote{#3}}
    \theoremstyle{boxes}
    \theoremstyle{definition}
    \newtheorem{definition}{Definition}
    \newtheorem{remark}{Remark}
\crefname{fact}{fact}{facts}
\Crefname{fact}{Fact}{Facts}
\newcommand{\elaine}[1]{{\footnotesize\color{magenta}[Elaine: #1]}}
\newcommand{\ke}[1]{{\footnotesize\color{red}[Ke: #1]}}
\newcommand{\elaine}[1]{}
\newcommand{\hao}[1]{}
\newcommand{\ke}[1]{}
\newtheorem{thm}{Theorem}[section]      
\newtheorem{theorem}[thm]{Theorem}
\newtheorem{lemma}[thm]{Lemma}
\newtheorem{claim}[thm]{Claim}
\newtheorem{fact}[thm]{Fact}
\newtheorem{remark}[thm]{Remark}
\theoremstyle{definition}
\newtheorem{definition}[thm]{Definition}
\newcommand{\amt}{\ensuremath{v}\xspace}
\newcommand{\id}{\ensuremath{u}\xspace}
\newcommand{\aux}{\ensuremath{\alpha}\xspace}
\newcommand{\potential}{\ensuremath{\Phi}}
\begin{document}
\begin{titlepage}
\title{Mechanism Design for Automated Market Makers}
\author{T-H. Hubert Chan \qquad Ke Wu \qquad Elaine Shi\thanks{Author ordering is randomized.}}
\date{}

\maketitle
\thispagestyle{empty}

\begin{abstract}
Blockchains have popularized automated market makers (AMMs), applications that run on a blockchain, maintain a pool of crypto-assets, and execute trades with users governed by some pricing function. AMMs have also introduced a significant challenge known as the Miner Extractable Value (MEV). Specifically, 
miners who control the contents and sequencing of 
transactions in a block can extract value by front-running and back-running users' transactions, creating arbitrage opportunities that 
guarantee them risk-free returns.
MEV not only harms ordinary users, but more critically, 
encourages miners to auction off favorable transaction placements 
to users and arbitragers. This has fostered a more centralized 
off-chain eco-system, departing from the decentralized equilibrium 
originally envisioned for the blockchain infrastructure layer. 

In this paper, we consider how to design AMM mechanisms 
that eliminate MEV opportunities. 
Specifically, we propose a new AMM mechanism that processes all transactions contained within a block according to some pre-defined rules, ensuring
that some constant potential function is maintained
after processing the batch. 
We show that our new mechanism satisfies two tiers of guarantees. First, for legacy blockchains where each block is proposed by a single (possibly rotating) miner, we prove that our mechanism satisfies arbitrage resilience, i.e., a miner cannot gain risk-free profit. 
Second, for blockchains where the block proposal process is decentralized and offers sequencing-fairness, we prove a strictly stronger notion called incentive compatibility --- roughly speaking, we guarantee that any individual user's best response is to follow the honest strategy.

Our results complement prior works on MEV resilience in the following senses.
First, prior works 
have shown impossibilities to address MEV entirely at the consensus level.
Our work demonstrates a new paradigm of mechanism design
at the application (i.e., smart contract) layer to ensure 
provable guarantees of incentive compatibility.
Second, many works 
have attempted to augment the underlying consensus
protocol with extra properties such as sequencing fairness.
While most previous works 
heuristically argued why 
these extra properties help to mitigate MEV, 
our work demonstrates in a mathematically formal 
manner how to 
leverage such consensus-level properties 
to aid the design of incentive-compatible mechanisms.

\end{abstract}

\end{titlepage}

\tableofcontents
\thispagestyle{empty}
\newpage
\setcounter{page}{1}

\section{Introduction}

Blockchains have popularized 
decentralized finance (DeFi), with one of its key applications
being  
Decentralized Exchanges (DEX) based on 
Automatic Market Makers (AMMs)~\cite{theoryamm}. 
As of March 2021, the top six 
AMMs, including Uniswap, Balancer, and others, 
collectively held approximately  
\$15 billion in crypto assets~\cite{sok-amm}. A typical AMM exchange 
maintains a pool of capital called the ``liquidity pool''
with two crypto-assets $X$ and $Y$. 
A smart contract specifies the rules how users can trade assets
with the pool.
For example, one commonly adopted rule is a 
{\it constant-product
potential function} defined as follows. 
Let ${\sf Pool}(x, y)$ 
denote the pool's state where $x \geq 0$ and $y \geq 0$
represent the units of $X$ and $Y$  
held by the pool, respectively.
A constant product potential requires that $x \cdot y = C$
for some constant $C > 0$.
This means that if a user buys $\delta x$ amount of $X$
from the pool, it needs to pay $-\delta y$  
amount of $Y$ such that $(x - \delta x) (y - \delta y) = C$.

DeFi applications such as AMMs have introduced
opportunities for miners 
to profit, often in a risk-free manner,
 by front-running and/or back-running 
the users' transactions, a phenomenon 
known as Miner Extractable Value (MEV).
Despite the decentralized nature of blockchains, 
the block proposal process in mainstream consensus protocols 
remains centralized.
For each block, a single selected miner\footnote{In this paper, 
no matter whether the underlying consensus 
is proof-of-stake or proof-of-work, we
generically refer to a consensus node
that produces blocks as a ``miner'' or a ``block producer''.} has unilateral control 
over 
which transactions are included and their sequencing. 
By exploiting this capability, 
miners can profit, often in a risk-free manner.  
For example, in a 
sandwich attack~\cite{darkforest,sok-amm,theorymev,highfreq-dex},
a miner identifies a victim user attempting to purchase
a crypto asset $X$ at a maximum price of $r$, 
and inserts
a ${\sf Buy}(X)$ transaction just before
the victim's buy order 
and a ${\sf Sell}(X)$ transaction
immediately after.
Since purchasing $X$ increases
its price, 
the miner effectively buys at a lower price
through front-running, forces
the victim to buy at the worst possible price $r$, 
and then sells at a higher price through back-running, 
locking in a profit. 
Beyond sandwich attacks,
miners can also take advantage of  
more sophisticated arbitrage 
opportunities to profit~\cite{just-in-time-defi,theorymev,darkforest}.

MEV is widely recognized as one of the most important 
challenges for blockchains  
today for several reasons. 
First, since MEV is extracted at the expense of users, it effectively increases
the barrier of entry for ordinary users to engage with 
DeFi applications.
Second, 
MEV undermines the 
stability and security of the underlying 
consensus protocol~\cite{instability,a2mm,highfreq-dex,darkforest}.
Specifically, miners may be incentivized 
to fork the blockchain if doing so offers
higher MEV rewards than standard block rewards. 
Third, the block producer's power
in deciding the block contents and sequencing 
has given rise to an 
off-chain economy. Block producers enter private
contracts with arbitragers and users alike, 
offering them favorable positions in the 
block. 
\ignore{1) the ability to front-run/back-run transactions to 
searchers who specialize in identifying arbitrage opportunities,
and 2) the ability to be protected from
frontrunning/backrunning to ordinary users. 
}
These private off-chain contracts have led to 
a centralizing effect in the 
underlying layer 1 (i.e., the consensus layer), 
causing the {\it de facto} layer 1~\cite{gupta2023centralizing}
to operate in a manner that significantly departs  
from the intended design, 
and its equilibrium behavior is not understood. 
A recent empirical 
measurement showed that today, more than 85\% 
of the Ethereum blocks 
are built by two block producers~\cite{centraleth}.

\subsection{Our Results and Contributions}

It would have been compelling 
if there existed a way to solve the MEV problem
entirely at the consensus layer, without having to modify
the existing applications. 
Unfortunately, the impossibility results 
in several 
previous works~\cite{bahrani2023transaction,credible-ex}
can be interpreted to mean that 
solving the MEV 
problem (in its most general form) entirely at the 
consensus layer, subject to today's architecture, 
is impossible. 
On the other hand, 
many works aimed to offer strengthened
guarantees at  
the consensus layer 
such as 
sequencing fairness~\cite{espresso-seq,decentral-seq,orderfair00,orderfair01,orderfair02}
or some form of privacy~\cite{encryptmempool00,encryptmempool01}.
While it is widely believed that these properties
help to mitigate MEV, there has been relatively little formal investigation 
on how we can take advantage of these 
extra consensus-level properties
in mechanism design.

Therefore, in this paper, we ask  
the following natural questions: 
\begin{itemize}
\item 
Instead of working  
entirely at the consensus level, 
can we rely on mechanism design at the application level
(i.e., smart contract level) 
to obtain provable guarantees 
of MEV resilience?
\item 
How do 
strengthened 
guarantees at the consensus layer aid
the 
design of incentive compatible mechanisms
at the application layer?
\end{itemize}

Specifically, we ask these questions in the context
of Automated Market Makers (AMMs) which represent one of the most
important 
DeFi applications. 
We now summarize our results and contributions. 

\myparagraph{A mechanism design approach towards mitigating MEV.}
Applying the philosophy of mechanism
design at the application layer, we want to design 
an AMM mechanism 
that removes MEV opportunities and
provides incentive compatibility by construction. 

We devise a new AMM mechanism 
(to be executed as a smart contract on chain)
with the following abstraction. 
In our mechanism, the pool holds two crypto-assets $X$ and $Y$. 
A user can trade with the pool by posting
a buy/sell {\it order} specifying how much of $X$ (or $Y$) they want to buy/sell, and their worst acceptable exchange rate. 
When a new block arrives, the mechanism takes the block of orders 
as input, and applies an allocation rule to 
all orders contained within the block. 
The allocation rule decides
which orders are partially or completely satisfied, and at what price. 
The mechanism maintains the following invariant: 
the pool's beginning state
denoted ${\sf Pool}(x_0, y_0)$ and end state 
${\sf Pool}(x_1, y_1)$
is guaranteed to satisfy 
some ``natural'' potential function $\potential$
(e.g., 
the constant-product function mentioned above). 

Our mechanism offers two tiers of guarantees
depending on 
whether the properties of the underlying consensus. 
Specifically, we consider 
two models: 
1) the {\it plain} model, capturing
today's mainstream consensus protocols where
for any particular block, 
the inclusion and sequencing of transactions 
are determined by a single, 
possibly strategic block producer; and 2)
the {\it weak fair-sequencing} model, intended
to capture a new generation of 
consensus protocols 
that offer sequencing fairness 
guarantees~\cite{espresso-seq,decentral-seq,orderfair00,orderfair01,orderfair02}, 
e.g., through a decentralized sequencer. 
Our mechanism achieves the following desirable, two-tier properties: 

\begin{enumerate} 
\item 
{\bf Arbitrage free} in the 
plain model. We guarantee that 
no arbitrager (e.g., user, block producer, or any intermediary) 
can gain risk-free profit, 
{\it even when the arbitrager (e.g., block producer) has unilateral control
over the block contents and transaction sequencing}. 
Here, risk-free profit happens when an arbitrager 
can gain in one asset without losing in another with  
probability $1$.

\ignore{Moreover, we can guarantee a suitable notion of {\it fair treatment},
i.e., the mechanism does not make use
of the sequencing of transactions within the same block.  
In other words, the mechanism does not create opportunities 
for miners to auction off favorable positions in the block 
to arbitragers and users. 
}
\ignore{
\begin{itemize}[leftmargin=5mm,itemsep=1pt]
\item 
{\it Arbitrage free}: no one cannot profit from arbitrage opportunities.
\item 
{\it Fairness}: 
\elaine{TODO: we need a stronger notion of fairness.}
\end{itemize}
}
\item 
{\bf Incentive compatibility} 
in the weak fair-sequencing model. 
In the weak fair-sequencing model, 
our mechanism not only achieves arbitrage resilience,  
but 
also guarantees incentive compatibility.
Specifically, incentive compatibility means that users are incentivized
to report their true demand and true belief of  
the relative value of the two crypto-assets, 
and no strategic behavior
allows a user to gain.
Later, 
we prove that 
{\it incentive compatibility is a strictly stronger notion
than arbitrage resilience} (Fact~\ref{fct:SPimpliesAR}). 
 \end{enumerate}

Our weak fair-sequencing model is meant to capture 
a decentralized sequencer that sequences the orders
based on their arrival times (importantly, not based 
on the orders' submission time). 
While 
this model places additional constraints
on the strategy space in comparison with the plain model, 
it {\bf does not prevent
front-running} and thus {\bf does not trivialize}
the mechanism design problem. 
Notably, in this model, 
a strategic user or miner can still wait for a victim to submit its 
order, and then immediately submit a dependent order. 
{The strategic order 
can even front-run the victim's order} if the strategic user's network
is faster, 
similar
to a rushing attack in the cryptography literature~\cite{Canetti2000,uc}.


\ignore{assumes that each user has an intrinsic 
arrival time denoted $\alpha$ that is determined by nature,    
and the orders created by that user
cannot arrive earlier
than $\alpha$. 
We stress that even the weak fair-sequencing model {\bf does NOT 
trivialize the problem 
of achieving incentive compatibility} --- 
even with weak fair-sequencing, 
strategic users can still post orders dependent on honest users' orders
and front-run them. 
This can happen if the strategic user has a faster network connection, 
allowing it to observe honest users' orders, 
post dependent orders, and ultimately win the  
race against the honest orders\footnote{Similar
to a rushing attack in the cryptography literature~\cite{Canetti2000,uc}.}.
}

Our work is also {\it among the first 
to formally articulate, from a mechanism designer's perspective, 
how extra properties at the consensus level  
lend to the design of strategy-proof 
mechanisms at the smart contract layer}. 

\myparagraph{Conceptual contributions.}
We put forth new modeling and definitions, which capture 
a mechanism design problem of a decentralized nature. 
In comparison with the classical
mechanism design literature, 
our model and strategy spaces capture the ``permissionless''
nature of blockchains. 
\elaine{TODO: define strategy space more explicitly}
Specifically, 
a strategic player may not only report its 
valuation/demand untruthfully, but 
also inject fake orders or 
post multiple orders.
Such strategies are possible because the 
mechanism does not have a-priori
knowledge of the number or the identities of the bidders. 
Compared to closely related works, 
our model circumvents the strong impossibilities shown by 
Ferreira and Parkes~\cite{credible-ex}, 
because we relax some unrealistic restrictions
they impose --- see \Cref{sec:compare}
for more details.
Therefore, we believe that our model is better suited  
for capturing real-world mechanism design
at the smart contract layer, particularly for AMMs. 
\ignore{
our model
is more realistic and does not insist on the overly stringent requirement
that the mechanism running on the blockchain must be first-come-first-serve.  
Instead, we allow the mechanism designer to specify the mechanism
to be run on the blockchain. 
Because of this, we
can circumvent the strong impossibility results 
shown by Ferreira and Parkes~\cite{credible-ex}, and we therefore believe
that ours is a more suitable model for studying mechanism design
for DeFi applications.
}
Our new model and definitions naturally 
give rise to many interesting open questions 
which we discuss in \Cref{sec:open}.


\subsection{Comparison with Related Work}
\label{sec:compare}

\myparagraph{Comparison with most closely related work.} 
Ferreira and Parkes~\cite{credible-ex}
showed that in an overly pessimistic model as explained below, 
achieving arbitrage resilience is impossible, let alone 
incentive compatibility. 
However, their impossibility result  
holds only in an overly stringent model that does not reflect
the real-life design space.
Specifically, when interpreted in our new framework, 
their impossibility holds only if
the AMM mechanism has to respect the following specific structure:
sort the incoming orders according to some rules (called ``verifiable
sequencing rules'' in their paper), 
and run a legacy AMM contract 
that processes the sorted orders sequentially, 
such that the constant potential function
must be maintained after executing each order. 
In comparison, in our model, the constant potential function only
needs to be maintained at the end of processing the entire batch. 
Because of their strong impossibility result,
Ferreira and Parkes~\cite{credible-ex}
showed how to achieve a weaker guarantee in their model, that is, 
if the miner made risk-free profit, then the user should
enjoy a price that is at least as good as if its order
were the only one in the block. 
\ignore{The prior work of 
Ferreira and Parkes~\cite{credible-ex} recently proposed an approach 
called verifiable sequencing rules. In particular, they propose
that the block proposer must be subject
to a set of verifiable rules when constructing the block. 
However, {\it their work does not eliminate arbitrage, and still
allows the miner to make risk-free profit.}
The only provide a weaker type of guarantee, that is,
if the miner profited from arbitrage, the user has some price guarantee. 
Similarly, their work {\it does not provide incentive compatibility 
(even when the miner is not strategic 
and trusted to behave honestly)}, 
and users may be incentivized to bid untruthfully.
Ferreira and Parkes~\cite{credible-ex}
describe an impossibility result: it is not possible
to prevent the miner from making risk-free profit
in their model. However, {\it their impossibility 
holds only in an overly pessimistic model 
that is not realistic}. 
 In particular, 
their impossibility 
applies 
only if 1) the mechanism must 
be first-come-first-serve (FCFS), i.e., it must 
process the transactions one by one in the order that they appear in
the block; and 2) the block can 
contain an arbitrary {\it set} of transactions.
These assumption are 
too strong, since in practice, the mechanism 
we run on the blockchain (e.g., a smart contract) need not be 
FCFS and the mechanism designer
gets to design this smart contract; 
moreover, the blocks can have a validity
rule that precludes certain combinations of transactions. 
}

Li et al.~\cite{greedyseq}
inherit the same model as 
Ferreira and Parkes~\cite{credible-ex}, and they 
study what is the profit-maximizing strategy for the miner 
and the implications for the users when the miner 
adopts the optimal strategy. 
Like Ferreira and Parkes~\cite{credible-ex}, they adopt an 
overly restrictive model which requires them to  
give up 
on achieving arbitrage resilience, let alone incentive compatibility.

\ignore{
In comparison with these works, 
we take a step back and rethink what is a better model. 
{\it Our paper proposes a new model 
that is not only more realistic than 
Ferreira and Parkes~\cite{credible-ex} and 
also allows us to  
circumvent the strong impossibilities they show. For these
reasons,  we believe
it is a better model to study mechanism design for DeFi applications}.  
}
{\it One of the contributions we make is exactly to recognize 
why
the existing models are too stringent and unrealistic,
and suggest a better model for the study  
of 
MEV-resilient mechanism design}.

\myparagraph{Batch clearing at uniform price does not 
guarantee incentive compatibility.}
A line of works explored the idea of batch clearing
at uniform price. We stress that batch clearing 
at uniform price~\cite{batchamm00,batchamm01,cow} 
does not automatically
guarantee incentive compatibility.  
For example, suppose there are many eligible orders and the mechanism 
can clear only a subset of them. 
If the mechanism selects the subset based
on the declared valuation,  
then a strategic user can lie about its valuation  
to get selected.

The {concurrent and independent} work of Canidio and 
Fritsch~\cite{batchamm00,batchamm01}
suggested {batch-clearing at a uniform price}
and using a different potential function than 
Uniswap's constant-product function.
Their approach satisfies arbitrage resilience,
but it does not satisfy 
incentive compatibility {\it even when the miner 
is trusted to behave honestly}, for the reason stated above. 
In fact, Canidio and
Fritsch~\cite{batchamm00,batchamm01}
does not even fully specify which subset of orders to clear
when there are many eligible candidates. 
The work of Zhang et al.~\cite{batchnotic}
also observed that batch clearing alone does not imply incentive compatibility. In fact,
they investigated the optimal strategy under batch clearing. 
The prior work of Ramseyer et al.~\cite{ramseyer2023augmenting} 
also considered batch exchanges that clear at a uniform
price. 
Like Canidio and
Fritsch~\cite{batchamm00,batchamm01}, their work does not provide 
incentive compatibility, even when the miner is fully trusted.
Besides batch trading at uniform pricing, 
other forms of batch trading~\cite{fairbatchauction} have also been considered, 
but they also do not satisfy our notion of incentive compatibility. 

\myparagraph{Related works that do not address MEV.} 
Milionis, Moallemi, and Roughgarden~\cite{myersonian} consider how to design
the demand curve for a market maker to 
maximize profit and meanwhile incentivize truthful reporting. 
Their work is of a completely different nature than ours, since 
they {\it do not aim to address the problem of MEV}.
Specifically, they consider a simple model where 
users directly submit orders to the market maker.
They {\it do not consider any arbitrage strategy} where users
or miners try to front-run or back-run others' orders to make profit.

Bartoletti et al.~\cite{maximizemev}
studied the miner's optimal MEV strategy under transaction reordering. Their
work also does not provide a solution  
to mitigate or address MEV. 


\myparagraph{Understanding the impact of MEV.}
A line of works have 
empirically or theoretically investigated the 
profitability or impact of MEV~\cite{clockwork,theorymev,maximizingmev,darkforest,just-in-time-defi,analyzesandwich,attackdefi,bundleprofit}.

\myparagraph{Empirical approaches towards mitigating MEV.}
Another line of work
suggest that the users themselves take action to mitigate
MEV, either by setting their slippage limits more cleverly~\cite{a2mm}, 
or by exploiting arbitrage opportunities themselves
to lower their transactional costs~\cite{mevgametheory}.
There are also various blog posts  
on online forums that suggest alternative designs~\cite{surplusmaxamm00,surplusmaxamm01}.
However, these works are empirical and do not lend to the
theoretical understanding of 
the equilibrium behavior
of the eco-system.

Both academic research and real-world blockchain projects 
have made an effort to build decentralized 
sequencers~\cite{espresso-seq,decentral-seq,orderfair00,orderfair01,orderfair02}, 
or encrypted mempools~\cite{encryptmempool00,encryptmempool01}. 
The former approach removes the ability for a single block proposer 
to decide the block contents and sequencing, and 
achieves some form of sequencing fairness~\cite{orderfair00,orderfair01,orderfair02}\footnote{Sequencing fairness is also commonly referred to as ``order fairness''. In this paper,
we use the term ``sequencing fairness'' to avoid collision with the usage
of ``order'' to mean a trade proposal.}.
The latter  
approach allows users to submit transactions in 
committed or encrypted format, which makes it harder
for miners to front-run and back-run transactions.
However, 
from a mechanism design perspective, we still lack  
mathematical understanding 
to what extent these new consensus/cryptographic abstractions 
can help us mitigate MEV
and achieve strategy-proof DeFi mechanisms.
In this sense, our work is {\it among the first to mathematically
articulate 
how to rely on ``sequencing fairness'' 
to achieve strategy-proofness by construction}.

\myparagraph{Sequencing fairness.}
A line of works~\cite{espresso-seq,decentral-seq,orderfair00,orderfair01,orderfair02}
have studied how to achieve order fairness
in consensus. 
Numerous blockchain projects
are also building decentralized sequencers~\cite{decentralseq00,decentralseq01,decentralseq02,decentralseq03,decentralseq04,decentralseq05}
which 
can be one approach for 
achieving sequencing fairness. 
Some works~\cite{welfaregaporder00,welfaregaporder01} have also pointed out 
the price of sequencing fairness such as loss in welfare. 
Improving the underlying mechanisms for achieving sequencing fairness 
is outside the scope of this paper. 
We also leave it as future work to study how to optimize social welfare
under incentive compatibility.

\myparagraph{Other related works.}
There is a recent line of work on 
transaction fee mechanism 
(TFM)
design~\cite{roughgardeneip1559-ec,foundation-tfm,crypto-tfm}. 
This line of work 
aims to design mechanisms such that users, miners, and user-miner
coalitions are incentivized to behave honestly. 
However, the current modeling approach of this line of work
captures only the utilities at the consensus layer. 
They cannot capture ordering and application-level MEV. 
The recent work of Bahrani et al.~\cite{bahrani2023transaction}
showed strong impossibility results for fully solving this problem
at the TFM-layer alone.  
In this sense, our work complements 
the line of work on TFM design 
by taking an application-level  
(i.e., smart-contract-level)
approach 
towards achieving incentive compatibility by construction.

\ignore{
Various solutions have been proposed to mitigate the risk of MEV:
\begin{itemize}[leftmargin=5mm,itemsep=1pt]
\item 
{\it Decentralized sequencers or encrypted mempools.} 
Both academic research and real-world blockchain projects 
have made an effort to build decentralized 
sequencers~\cite{espresso-seq,decentral-seq,orderfair00,orderfair01,orderfair02}, 
or encrypted mempools~\cite{encryptmempool00,encryptmempool01}. 
The former approach removes the ability for a single block proposer 
to decide the block contents and sequencing, and 
achieves some form of sequencing fairness~\cite{orderfair00,orderfair01,orderfair02}\footnote{Sequencing fairness is also commonly referred to as ``order fairness''. In this paper,
we use the term ``sequencing fairness'' to avoid collision with the usage
of ``order'' to mean a trade proposal.}.
The latter  
approach allows users to submit transactions in committed or encrypted format.
Most such projects have focused on the underlying 
consensus/cryptographic techniques to achieve 
either decentralized sequencers or encrypted mempools; however, 
from a mechanism design perspective, we still lack  
understanding 
to what extent these new consensus/cryptographic abstractions 
can help us mitigate MEV
and achieve strategy-proof DeFi mechanisms.
For example, the recent work of \cite{crypto-tfm} showed that for 
the problem of transaction fee mechanism design, 
having an ``encrypted mempool''  or using multi-party computation
to decide the block contents 
does not trivialize the mechanism design challenges.
\elaine{TODO: give an example later, interpret our arrival order model}
\item 
{\it Other mitigations. }    
\end{itemize}
}

\subsection{Scope and Open Questions}
\label{sec:open}


Just like the recent literature including 
Ferreira and Parkes~\cite{credible-ex} and Li et al.~\cite{greedyseq}, 
the scope of the present paper is 
restricted to how defend against MEV 
in a {\it standalone} two-asset AMM mechanism. 
We begin with the standalone setting because
it serves as a necessary basis for understanding 
the compositional setting with multiple instances. 
It is also a widely adopted approach in the mechanism design 
and cryptography 
literature
to begin with the standalone setting first. 
We leave it as an interesting open question 
how to achieve provable game-theoretic
guarantees in a compositional setting   
where multiple instances can interact with each other. 
We stress that in general, unlike
in the cryptography literature where there 
are composable notions of security~\cite{Canetti2000,uc}
which makes composition worry-free, 
most notions in game theory do not naturally compose, 
and composition is typically treated on a case-by-case basis.
\ignore{
How to address cross-mechanism MEV,  
how to have game-theoretic 
properties that compose across multiple instances, 
and how to extend our 
results to more applications (e.g., multi-asset
AMMs, DeFi lending applications)
are left as open questions for future work.
}
\ignore{
Despite these limitations, 
we believe that our effort is a small but nonetheless
valuable step forward towards formulating a mathematical  
foundation 
for decentralized mechanism design, and particularly 
through the lens of a mechanism designer.
It is also worth mentioning that other 
subfields of decentralized mechanism design (e.g.,
the transaction fee mechanism 
literature~\cite{roughgardeneip1559-ec,foundation-tfm,crypto-tfm})
are also similarly in a nascent state. Likewise,
the initial foundational 
works~\cite{roughgardeneip1559-ec,foundation-tfm,crypto-tfm} in these areas  
also 
made similar simplifying assumptions
to make progress, 
e.g., they also start by exploring standalone setting and
leaving compositional guarantees 
and cross-block strategies to future work.
}


Our new model and definitions give rise to many interesting open problems.
One interesting question is how to extend our results to AMMs
with multiple assets.
Another interesting question is whether it is possible
to achieve the stronger notion of incentive compatibility 
without relying on sequencing fairness.
Currently, our model assumes that all the orders are submitted
in the clear, 
and we thus define incentive compatibility in the ex post setting.
A future direction is to understand how to define and achieve  
incentive compatibility
in an MPC-assisted model~\cite{crypto-tfm} 
or an ``encrypted mempool''~\cite{encryptmempool00,encryptmempool01} model 
where  
transactions are submitted in encrypted or committed format. 
Another interesting question is whether we can design strategy-proof
AMM mechanisms when the execution syntax is atomic rather than 
partial fulfillment like what we consider.

We believe that the modeling work in our paper
helps to lay the theoretical groundwork
for exploring questions such as above in the future. 







\section{Definitions}
\label{sec:defn}

\subsection{Swap Mechanism for AMMs}
A swap mechanism for a pair of assets $(X, Y)$ has a state 
(also called the {\it pool state})
denoted ${\sf Pool}(x, y)$ 
where $x$ and $y$ are non-negative values that represent the amount of  
each asset currently held by the mechanism.
A user can submit an order to trade with the mechanism
in two ways: either buy $X$ and pay in $Y$,
or buy $Y$ and pay in $X$.
Suppose the user buys $\delta x$ units of $X$
and pays $\delta y$ units of $Y$,
then the updated state 
after the trade 
will become ${\sf Pool}(x-\delta x, y + \delta y)$.

\myparagraph{Order.}
Each order is of the form 
$(t, \amt, r, \aux)$ where 
\begin{itemize} 
\item 
$t \in \{{\sf Buy}(X), {\sf Buy}(Y), {\sf Sell}(X), {\sf Sell}(Y)\}$ 
is the type of the order indicating
that the user wants to buy or sell 
and which asset; 
\item 
$\amt$
is a non-negative value that denotes 
the maximum amount of the user wants to buy or sell; 
\item 
$r$ denotes the user's 
acceptable exchange rate, i.e., 
the user believes that 
each unit of $X$ is worth $r$ units of $Y$.
For example, if the order is of type ${\sf Buy}(X)$, 
then the user is willing to pay at most $r$ 
units of $Y$ for each unit of $X$; 
if the type is ${\sf Sell}(Y)$, then $1/r$ is the minimum
asking price in $X$ for each unit of $Y$.
\ignore{
$r$ denotes the worst-case price 
the user is willing to accept for each asset. 
If it is a buy order, 
then the user is willing to buy the asset at any price
that is lower than or equal to the specified $r$.
Otherwise if it is a sell order,
the user is willing to sell the asset at any price
higher than or equal to the specified $r$.
}
\item 
$\aux$ is an arbitrary string denoting any additional auxiliary information,
e.g., the submitter's identity, timestamping information, 
position in the block, and so on. 
\end{itemize}

Note that given 
an order of the form (${\sf Sell}(Y)$, $\amt$, $r$, \_)\footnote{Here, the ignore symbol $\_$ means that we are ignoring the content of this field in the current context.},
\elaine{TODO: define this ignore symbol}
another way to view it is that the user 
wants to buy $X$; it 
is willing to pay at most $r$ units of $Y$ for each unit
of $X$; moreover, it wants to buy as many units of $X$ as possible
subject to a capital of $\amt$ units of $Y$.
Henceforth, for an order of the type ${\sf Buy}(X)$ or
${\sf Sell}(X)$, we say that $X$ is the 
{\it primary asset} of the order.

\ignore{
For example, we can express the following type of scenarios.
\begin{itemize}[leftmargin=5mm]
\item 
$({\sf Buy}(X), 5, 10, \infty)$:
a user wants to buy up to $5$ units of $X$ and willing
to pay at most $10$ units of $Y$ for each 
unit of $X$. The user has prepared at least $50$ units 
of $Y$ such that it can afford it even if the order
is fully executed; 
\item 
$({\sf Buy}(X), \infty, 10, 100)$: 
a user wants to invest in $X$ with a maximum budget of
$100$ units of $Y$, she is willing to pay at most $10$ units of $Y$
for each unit of $X$.
Another way to think about it is that the user wants
to sell up to $100$ units of $Y$, and 
she is willing to accept at least $0.1$ 
unit of $X$ for each unit of $Y$.
\end{itemize}
}

\elaine{TODO: check for v and change to $\amt$.}

\myparagraph{Swap mechanism.}
A possibly randomized (partial fulfillment) swap mechanism 
should define the following rules\footnote{In \Cref{sec:conclusion},
we pose the open question of how to achieve incentive compability for
all-or-nothing fulfillment mechanisms.}:
\begin{itemize} 
\item 
{\it Honest strategy.}
Given a user's private type $T$, the initial state ${\sf Pool}(x, y)$, 
the honest strategy, often denoted $HS(x, y, T)$, 
outputs a vector of orders the user should submit.
A user's private type $T$ can contain information
such as how many units of $X$ and $Y$ it currently holds,
and the user's private valuation of the exchange rate
between $X$ and $Y$.
\item 
{\it Allocation rule.} 
The allocation rule receives 
as input an initial state ${\sf Pool}(x, y)$,
a list of orders,  
and for each order $(t, \amt, r, \aux)$, it outputs
the following:
\elaine{TODO: add budget feasibility}
\begin{itemize}[leftmargin=5mm,itemsep=1pt]
\item 
the amount $\amt' \in [0, \amt]$
of primary asset that has been fulfilled --- note that the fulfillment
can be partial; 
\item 
an average exchange rate $r' > 0$ at which 
the order was fulfilled.
For a ${\sf Buy}(X)$ order,
it means that the user pays $\amt' \cdot r'$ units
of $Y$ in exchange for $\amt'$ units of $X$. 
For a ${\sf Sell}(Y)$ order, the user obtains $\amt'/r'$ units
of $X$ for the $\amt'$ units of $Y$ sold.
We require that for a ${\sf Buy}(X)/{\sf Sell}(Y)$
order, $r' \leq r$, i.e., the purchase price cannot
be higher than the specified maximum rate $r$; 
and for a 
${\sf Buy}(Y)/{\sf Sell}(X)$
order, $1/r' \leq 1/r$.

\end{itemize}
\end{itemize}

In a real-world instantiation, the pool state
is recorded on the blockchain, and the allocation rule
is executed on the blockchain (e.g., in the form of a smart contract).

\ignore{
We require that the mechanism satisfy 
individual rationality and budget feasibility:
\begin{enumerate}[leftmargin=6mm]
\item 
{\it Individual rationality.}
For a ${\sf Buy}(X)$ order, the average exchange
rate $r' \geq r$,  and for a ${\sf Buy}(Y)$ order, it must be $1/r' \geq 1/r$.
\item 
{\it Budget feasibility.}
Each order cannot pay more than the specified maximum budget $\beta$.
\end{enumerate}
}

\myparagraph{Invariant on pool state.}
We consider swap mechanisms
that satisfy the following invariant
on pool state.
Given some 
initial state ${\sf Pool}(x, y)$, and 
the outcome output by the allocation rule, one can uniquely
determine the ending state ${\sf Pool}(x', y')$. 
We require that 
initial and ending pool state must satisfy 
some constant potential function, that is, 
$\potential(x', y') = \potential(x, y)$. We define
potential function 
and requirements on the potential function below.

\myparagraph{Potential function.}
We consider swap mechanisms that  
respect a constant potential function $\potential(\cdot, \cdot)$. 
Specifically, suppose the pool's initial state 
is 
${\sf Pool}(x, y)$, 
and changes to 
${\sf Pool}(x', y')$ after 
the mechanism processes a batch of orders. 
Then, it must be that 
\[
\potential(x, y) = \potential (x', y').
\]
In practice, the most widely adopted approach is a constant-product
market maker where
$\potential(x, y) = x \cdot y$. 
In other words, 
suppose the initial pool state is $(x, y)$
and some user buys $\delta x$ amount of $X$,
then it must pay $-\delta y$ 
units of $Y$ where $\delta y$ 
can be calculated by solving the following equation:
\[
(x - \delta x) (y - \delta y) = x y.
\]

\elaine{TODO: remove reference to valid initial pool state}
\elaine{TODO: change all usage of potential func}

\myparagraph{Assumptions on the potential function.}
We assume the standard assumption that 
the potential function $\potential(\cdot, \cdot)$
is increasing, differentiable, and concave.

\myparagraph{Market exchange rate.}
In our swap mechanism, we will make use of the notion
of a market exchange rate, as defined below.
\begin{definition}[Market exchange rate]
\label{defn:exchange-rate}
Given a pool state ${\sf Pool}(x, y)$, the 
current market exchange rate 
is defined as 
\[
r(x, y) = \frac{\partial \potential/\partial x}{\partial \potential/\partial y}(x, y).
\]
Intuitively, it means that 
to buy an infinitesimally small $d x$
amount of $X$, we need to pay 
$r(x, y) \cdot dx $ units of $y$.
\end{definition}
Throughout this paper, whenever we say 
rate, it always means how much $y$ one has to pay per unit of $x$ rather than the other way around.

\ignore{
\begin{definition}[Valid potential function]
We assume that the potential function $\potential(x,y)$ satisfies the following conditions.
\begin{itemize}[leftmargin=5mm,itemsep=1pt]
\item 
For any possible $x\geq 0$ in the domain, there exists one and only one $y\geq 0$ such that $\potential(x,y) = 1$. 
Equivalently, $\potential(x,y)=1$ induces a function $y = \fy(x)$ such that $\potential(x, \fy(x)) = 1$.
\item 
$y = \fy(x)$ is a differentiable, decreasing and convex function:
\begin{itemize}[leftmargin=5mm,itemsep=1pt]
    \item Differentiable and decreasing: $\fy'(x)<0$.
    \item Convex: for any $x_0, x_1\in{\sf dom}(\fy)$, for any $\alpha\in[0,1]$, we have $\fy(\alpha x_0 + (1-\alpha) x_1)\leq \alpha \fy(x_0) + (1-\alpha) \fy(x_1)$, where ${\sf dom}(\fy)$ denotes the domain of $\fy$. \ke{do we need strictly convex}
\end{itemize}
\end{itemize}
\label{defn:potentialvalid}
\end{definition}
}

\subsection{Arbitrage Resilience}

Arbitrage resilience means that an arbitrager has no strategy such that it gets a net gain in one asset without any loss in the other.
 
\ignore{
Given some mechanism,  some initial state ${\sf Pool}(x_0, y_0)$, 
some vector of orders ${\bf b}$, 
a strategy for an arbitrager is said to be 
a {\it risk-free arbitrage strategy} 
iff 
the outcome after executing the mechanism  
}


\elaine{TODO: comment on why we don't allow Y to be 0 and X to remain.
1. it gives a stronger impossibility. 2.
with local efficiency, the two are equivalent}

\begin{definition}[Arbitrage resilience]
We say that a mechanism
satisfies 
arbitrage resilience iff 
given any initial pool state, 
any input vector of orders,
with probability $1$ over the random coins of the mechanism, 
the following must hold:
there does not exist a subset of orders whose joint outcomes  
result in 
$\delta x \geq 0$ net gain in $X$ and $\delta y \geq 0$
net gain in $Y$, such that 
at least one of $\delta x$ and $\delta y$ is strictly greater than $0$.
\ignore{
if any ${\sf Buy}(X)$ or ${\sf Sell}(Y)$ order is executed at an average exchange rate  
of $r$ and any 
${\sf Buy}(Y)$ or ${\sf Sell}(X)$ order is executed 
at an average 
exchange rate of $r'$, it must be that $r' \leq r$.
}
\label{defn:arbitrage}
\end{definition}

The definition above is consistent with 
Ferreira and Parkes~\cite{credible-ex}'s notion of no risk-free return,
although their scheme cannot guarantee no risk-free return (i.e.,
arbitrary resilience), whereas ours does. 
\elaine{TODO: double check}

\begin{remark}
If a mechanism satisfies \Cref{defn:arbitrage}, 
it means that it satisfies arbitrage resilience  
in a very strong sense: i.e.,   
an arbitrager (e.g., block producer)
cannot make risk-free profit 
{\it even when it  
can 1) fully control the block contents;
2) control the sequencing of orders within the block;
3) inject its own orders;
and 4) drop others' orders}.
\end{remark}

Indeed, the mechanism 
described later in this paper will satisfy 
arbitrage resilience 
even when the underlying consensus does not provide any sequencing fairness
guarantees.  


\ignore{
\begin{definition}[Fair Treatment]
We say that a mechanism satisfies fair treatment 
iff given an arbitrary vector of orders, 
for any two orders 
contained within this vector
whose $t$, $\amt$, and $r$ fields are identical, 
their outcomes are identically distributed.
\end{definition}
}

\ignore{
\subsection{Local Efficiency}

\begin{definition}[Local efficiency]
Given a vector of orders, we say that an outcome
is locally efficient iff 
there does not exist any buy order that is not completely fulfilled,
with a price $r$ is greater than the ending price $r^*$;  \elaine{TO DEFINE}
and there does not exist any sell order that is not completely fulfilled,
with a price $r$ smaller than the ending price $r^*$. 
We say that a mechanism is locally efficient, 
iff for any vector of orders, 
with probability $1$, the outcome is locally efficient. 
\end{definition}

\begin{definition}[Best possible local efficiency under arbitrage
resilience]
\elaine{TODO: FILL}
\end{definition}
}

\subsection{Plain Model}

The plain model is meant to capture
mainstream consensus protocol today where the contents of each individual
block is determined 
by a single block producer responsible for proposing or mining that block. 

\myparagraph{Strategy space in the plain model.}
In the plain model, 
we assume that a strategy user 
or miner with intrinsic type $(t, \amt, r, \aux)$
may engage in the following strategies:
\begin{itemize} 
\item 
Post zero or multiple arbitrary orders which may or may not
reflect its intrinsic type --- this captures 
strategies that involve misreporting valuation and demand, 
as well as posting
of fake orders; 
\item 
Censor honest users' 
orders --- this captures a strategic miner's ability to exclude 
certain orders from the 
assembled block;
\item  
Arbitrarily misrepresent
its own auxiliary information field $\alpha$, or  
even modify the $\alpha$ field of honest users' orders --- 
meant to capture the ability of a miner 
to decide the sequencing of the transactions within a block,
where the arrival-time and position information may be generically captured
by the auxiliary information field $\alpha$. 
\item 
Decide its strategy  
{\it after} having observed honest users' orders. 
\end{itemize}

\ignore{
Note that this general formulation of strategy space 
captures the miner's influence
too. For example, 
suppose the $\alpha$ in of the order 
encodes the position 
of the order 
within the block, 
and suppose that the 
mechanism makes use
such information make decisions, then a strategic
miner is able to 
arbitrarily modify $\alpha$ to its advantage. 
}
The coalition of a miner with 0 demand and  
a user with  some 
positive demand 
as captured by its type 
$(t, \amt > 0, r, \aux)$
can simply be viewed as a single strategic player with 
type $(t, \amt > 0, r, \aux)$.

\subsection{Weak Fair-Sequencing Model}
\label{sec:timedefn}


We define a weak fair-sequencing model,
meant to capture a new generation of consensus protocols
that employ a decentralized sequencer
and offers some form of sequencing fairness~\cite{espresso-seq,decentral-seq,orderfair00,orderfair01,orderfair02}. 
Such a decentralized sequencer
will sequence the transactions based
on their (approximate) arrival times. 
We stress that even in the weak fair-sequencing model,
it is possible for a strategic user to
observe a victim's order, post a dependent order,
and have the dependent order race against and front-run 
the victim's 
order. 
Such a front-running attack can succeed
especially 
when the strategic user's network is faster
than the victim's. 
In particular, we stress that the 
weak fair-sequencing model
is sequencing orders based on their {\it arrival times,
not the time of the submission of these orders}.

\ignore{
We introduce a weak fair-sequencing model 
that captures the scenario when the underlying consensus
layer has a decentralized block proposal process
that provides 
sequencing fairness~\cite{orderfair00,orderfair01,orderfair02}. 
Specifically, we assume that the  
proposed block promises to sequence the transactions according
to their arrival order.
We want to capture 
sequencing fairness without making an overly restrictive
assumption.  
In particular, we want  
to capture strategies where 
a strategic user or miner 
submits an order dependent on others' orders,
and 
meanwhile front-runs  
the orders that it depends on (e.g., if it has a faster
network link). 
We therefore use the following modeling approach. 
}
Recall 
that a user's intrinsic {\it type} is of the form $(t, \amt, r, \aux)$
where $(t, \amt, r)$
denotes the user's true valuation and 
budget. In the weak fair-sequencing model, we will use 
the $\aux$ field to encode 
the order's arrival time 
--- a smaller $\aux$ means that the user arrives earlier.

We shall assume that  under honest strategy, a user's order
should always be populated with the correct $\alpha$ whose
value is determined by nature,  
and equal to the time at which the order is generated 
plus the user's network
delay.  A strategic user is allowed 
to delay the submission of its order. 

\myparagraph{Strategy space in the weak fair-sequencing model.}
We consider the following strategy space
in the weak fair-sequencing model:
\begin{itemize} 
\item 
A strategic user or miner
with intrinsic type $(t, \amt, r, \aux)$
is allowed to post 
zero or multiple bids 
of the form $(\_, \_, \_, \aux')$
as long as $\aux' \geq \aux$.
This captures misreporting valuation and demand, 
posting fake orders, as well as delaying the posting
of ones' orders. 
\item 
The strategic user 
or miner 
can decide its strategy 
{\it after} observing honest users' orders. 
\end{itemize}

Compared to the plain model, the weak fair-sequencing model imposes some restrictions on the strategy space. 
Specifically, in the plain model, 
a strategic user or miner 
can arbitrarily modify the $\alpha$ field of its own order 
or even others' orders, 
and a strategic miner may censor honest users' orders. 
In the 
weak fair-sequencing model, a strategic 
user or miner can no longer
under-report its  
$\alpha$, 
cannot modify honest users' $\alpha$,  
and cannot censor honest users' orders, 
because
the sequencing of the transactions is determined 
by the underlying decentralized sequencer.

Importantly, 
despite these constraints on the strategy space,
the weak fair-sequencing model still permits
front-running-style attacks as mentioned earlier, and
thus mechanism design remains non-trivial even
under the slightly restricted strategy space.


\ignore{
\begin{remark}[Why the weak fair-sequencing model does NOT trivialize the problem]
Jumping ahead, our incentive compatibility notion
is in the {\it ex-post} setting, i.e., incentive compatibility 
is guaranteed even when 
a strategic player may post orders that depend on all other users' orders.
In this sense, our weak fair-sequencing model
permits {\it rushing} strategies: 
a strategic user 
can submit an order after seeing others' orders; 
and if its network is faster than honest users,
its order can actually arrive earlier than the honest ones!
In other words, the field $\aux$ should be viewed
as the time the order actually arrives, not when
the order was initially submitted to the network. 
\end{remark}
}

\subsection{Incentive Compatibility}

\myparagraph{Defining a user's preference among outcomes through a partial ordering.}
To define incentive compatibility, we first
need to define 
a ranking system that expresses
a user's preference among different outcomes.

We can use a pair $(\delta x, \delta y)$ to denote the outcome, 
meaning that the user has a net gain of $\delta x$ 
in $X$, and it has a net gain of $\delta y$ 
in $Y$ (where a net loss is captured as negative gain).
Consider two outcomes 
$(\delta x_0, \delta y_0)$ and $(\delta x_1, \delta y_1)$,
and suppose that the user's intrinsic type is 
$T = ({\sf Buy}(X), \amt, r, \_)$. 
Naturally, 
for such a user, 
outcome 
$(\delta x_1, \delta y_1)$ is at least as good as 
$(\delta x_0, \delta y_0)$, henceforth
denoted  
$(\delta x_0, \delta y_0) \preceq_T (\delta x_1, \delta y_1)$, if
one of the following is true: 
\begin{itemize} 
\item
$\delta x_0 \leq \delta x_1$, $\delta y_0 \leq \delta y_1$. 
In other words, 
relative to 
$(\delta x_0, \delta y_0)$, the user gains no less
in either asset  
in the latter outcome 
$(\delta x_1, \delta y_1)$.
\item 
$\delta x_0 \leq \delta x_1 \leq \amt$ (or $q \leq \delta x_1 \leq \delta x_0$), 
and $r(\delta x_1 - \delta x_0) \geq \delta y_0-\delta y_1$. 
In other words, 
the latter outcome $(\delta x_1, \delta y_1)$
is closer to satisfying the demand $\amt$, 
and moreover, the user paid at most $r$ {\it marginal} price
for each {\it extra} unit of $X$ 
in the latter outcome. 
\end{itemize}
\elaine{TODO: double check, this may not be complete}

\begin{remark}[Why define a partial ordering rather than 
a real-valued utility]
The reader may wonder why we do not define  
a real-valued utility which would have given a total ordering
on all outcomes. 
The reason why we define only a partial ordering
and allow some outcomes to be incomparable 
is because a strategic user 
(e.g., whose intrinsic demand is to buy up to $\amt$ units of $X$) 
can act arbitrarily, which may 
cause its net gain $\delta x$ in $X$ 
to be either negative, or greater than the intrinsic demand $\amt$.
We allow some of these outcomes to be incomparable.
For example,
suppose 
relative to $(\delta x_0, \delta y_0)$, 
the latter outcome $(\delta x_1, \delta y_1)$
buys some extra units at a margial price better than the specified
rate $r$, but it overshoots the intrinsic demand,
then these two outcomes are incomparable.

Finally, the case for 
other types including  
${\sf Buy}(Y)$, ${\sf Sell}(X)$, and ${\sf Sell}(Y)$ types, 
a partial ordering 
can be symmetrically defined --- we give the full
definition of the partial ordering in \Cref{sec:rank}.
Moreover, \Cref{sec:rank} 
also give more explanations why we choose to define a partial ordering
to rank the outcomes rather than a real-valued utility. 
\end{remark}

\elaine{TODO: refer to appendix for formal definition of partial ordering}

\myparagraph{Definition of incentive compatibility.}
Since in our paper, we consider deterministic mechanisms,
we will define incentive compatibility only for a deterministic
mechanism. Note that the definition can easily be extended
to randomized mechanism  
using suitable notions of stochastic dominance.

In the definition below,
we use $HS(T)$ to denote the honest strategy of a user
with intrinsic type $T$ --- for a direct-revelation mechanism, 
the honest strategy is simply to reveal the user's true type.
Further, we use ${\sf out}^u(x_0, y_0, {\bf b})$
to denote the outcome of user $u$ when
the mechanism is executed over initial pool
state ${\sf Pool}(x_0, y_0)$, and a vector of orders
${\bf b}$.

\begin{definition}[Incentive compatibility]
Given a deterministic swap mechanism, 
we say that it satisfies incentive compatibility (w.r.t. some partial
ordering relation $\preceq_T$), 
iff for any initial pool state ${\sf Pool}(x_0, y_0)$,
for any vector of orders ${\bf b}_{-u}$
belonging to all other users except $u$, 
for any intrinsic type $T$ of the strategic user $u$, 
for any possible strategic 
order vector 
${\bf b'}$ of the user $u$, 
either 
${\sf out}^u(x_0, y_0, {\bf b}_{-u}, HS(T)) \succeq_T
{\sf out}^u(x_0, y_0, {\bf b}_{-u}, {\bf b}')$
or 
${\sf out}^u(x_0, y_0, {\bf b}_{-u}, HS(T))$ 
and ${\sf out}^u(x_0, y_0, {\bf b}_{-u}, {\bf b}')$
are incomparable w.r.t. $\preceq_T$.
\label{defn:ic}
\end{definition}

More intuitively, 
the definition says that {\it no strategic play can result
in a strictly better outcome  
than the honest strategy}.

\myparagraph{Incentive compatibility implies arbitrage resilience.}
The fact below shows that incentive compatibility implies
arbitrage resilience. 

\begin{fact}
Suppose that the potential function $\Phi$ is increasing. 
We have that incentive compatibility 
implies arbitrage resilience. 
\label{fct:SPimpliesAR}
\end{fact}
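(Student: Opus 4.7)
The plan is to prove the contrapositive: if arbitrage resilience fails, then incentive compatibility fails. By \Cref{defn:arbitrage}, a failure of arbitrage resilience furnishes an initial pool state ${\sf Pool}(x_0, y_0)$, an input order vector ${\bf b}$, and a subset $S \subseteq {\bf b}$ whose joint outcomes under the mechanism executed on ${\bf b}$ sum to a net gain $(\delta x, \delta y)$ with $\delta x \geq 0$, $\delta y \geq 0$, and at least one strictly positive.

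To produce a violation of \Cref{defn:ic}, I would take the strategic user $u$ to have a ``null'' intrinsic type, for concreteness $T = ({\sf Buy}(X), 0, r, \_)$ for an arbitrary $r$, whose honest strategy $HS(T)$ submits no orders (or submits zero-amount orders that execute trivially). Setting ${\bf b}_{-u} := {\bf b} \setminus S$, the honest outcome is ${\sf out}^u(x_0, y_0, {\bf b}_{-u}, HS(T)) = (0,0)$. Now let $u$ deviate by playing the strategic vector ${\bf b}' := S$. Since the mechanism is deterministic and depends only on the orders in its input vector (not on which user submitted them), the execution on ${\bf b}_{-u} \cup {\bf b}'$ coincides with the execution on ${\bf b}$, and the outcome attributable to $u$ equals the joint outcome of the orders in $S$, which is $(\delta x, \delta y)$.

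It then suffices to check $(0,0) \prec_T (\delta x, \delta y)$. The Pareto clause of $\preceq_T$ yields $(0,0) \preceq_T (\delta x, \delta y)$ directly from $0 \leq \delta x$ and $0 \leq \delta y$. For strictness I would verify that the reverse $(\delta x, \delta y) \preceq_T (0,0)$ cannot hold: the Pareto clause fails since at least one coordinate is strictly positive, and the marginal-price clause, instantiated with $\amt = 0$, collapses to the condition $-r\,\delta x \geq \delta y$, which also fails under $\delta x, \delta y \geq 0$ with at least one strict inequality. This contradicts \Cref{defn:ic}, which would require the honest outcome to be either weakly preferred to or incomparable with every strategic outcome.

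The main subtlety, rather than a hard obstacle, is the tacit invariance claim that user $u$'s outcome under ${\bf b}'$ is the same as the joint outcome of $S$ in the original execution; this follows from the mechanism being symmetric in submitter identity, i.e.\ from the fact that two orders with identical $(t, \amt, r, \aux)$ tuples are treated identically. The hypothesis that $\potential$ is increasing enters only to ensure that the arbitrage resilience definition is faithful: under a monotone potential, any witness $(\delta x, \delta y) \geq 0$ with a strictly positive coordinate really corresponds to value extracted from the pool, ruling out degenerate rewritings of pool state that could preserve $\potential$ while yielding spurious ``gains.''
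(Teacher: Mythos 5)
Your proposal is correct and follows essentially the same route as the paper's proof: argue the contrapositive by letting a zero-demand strategic user inject the arbitrage subset $S$ as fake orders into the world ${\bf b}\setminus S$, so that the deviation yields net gain $(\delta x,\delta y)\geq 0$ with one coordinate strictly positive, contradicting incentive compatibility. Your added verification that $(\delta x,\delta y)$ strictly dominates $(0,0)$ under the partial ordering (and your gloss on the role of the increasing-$\Phi$ hypothesis) is just a more explicit rendering of the step the paper leaves implicit.
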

\begin{proof}
\ignore{
Given that $\Phi$ is increasing, 
the market exchange rate $r(x, y) > 0$ for any positive
$x$ and $y$ --- if this is not the case, 
there must exist some $x' > x$ and $y$
such that $\Phi(x, y) = \Phi(x', y)$ or 
some $y' > y$ and $x$
such that $\Phi(x, y') = \Phi(x, y)$, thus contradicting
the fact that $\Phi$ is increasing.  
This means that 
}
We can prove the contra-positive, that is, 
a mechanism that is not arbitrage free 
 cannot be incentive compatible.
Suppose that 
the
mechanism is not arbitrage free.
This means that there exists a 
list of orders $S = \{(t_i, v_i, r_i, \alpha_i)\}_i$ 
such that under honest execution, a subset 
of the orders 
$S' \subseteq S$ 
will enjoy $\delta x \geq 0$ and $\delta y \geq 0$,
and at least one of the two is strictly positive. 
Now imagine that there is a world 
that consists of the orders 
$S \backslash S'$.  
In this case, a strategic user or miner with 0 demand
can  
inject a set of fake orders $S'$, and 
clearly this strategic behavior has positive gain, thus violating
incentive compatibility.
Note that this strategy works even in the weak fair-sequencing model,
as long as the 
strategic user's inherent arrival time $\alpha$ is no larger
than the orders in $S'$.  
\end{proof}

Since our plain-model mechanism 
gives an example that satisfies arbitrage resilience
but not incentive compatible, 
we conclude that 
incentive compatibility is {\it strictly} stronger
than arbitrage resilience.  
Specifically, at the beginning of 
\Cref{sec:refinement-orderfair}, we explain 
why we cannot achieve incentive compatible in the plain model.
The strategies presented in this section 
also help to illustrate why the stronger
notion of incentive compatibility 
is more desirable.

\elaine{TODO: associate line numbers with phases in narrative}

\begin{figure*}[p]
\begin{mdframed}
    \begin{center}
    {\bf Our swap mechanism} 
    \end{center}

\vspace{2pt}
\noindent
    \textbf{Input:} 
A current pool state ${\sf Pool}(x_0, y_0)$, 
and a vector of orders 
${\bf b}$. 
Since the mechanism 
does not make use of the auxiliary information field,
we simply assume each order is a tuple of the form $(t, \amt, r)$.
    
\vspace{5pt}
\noindent\textbf{Mechanism:}
\begin{enumerate} 
\item 
Let 
$r_0 := r(x_0, y_0)$ be the initial exchange rate.
Ignore all ${\sf Buy}(X)/{\sf Sell}(Y)$
orders whose specified rate 
$r < r_0$, and ignore all ${\sf Buy}(Y)/{\sf Sell}(X)$ orders 
whose specified rate $r > r_0$.
Let 
${\bf b}'$ 
be the remaining orders.
\item 
Let $\sigma = \sum_{(t, \amt, r) \in {\bf b}'} \beta(t, \amt, r)$
where $\beta(t, \amt, r) = \begin{cases}
\amt & \text{ if } t = {\sf Buy}(X)\\
-\amt & \text{ if } t = {\sf Sell}(X)\\
-\amt/r_0 & \text{ if } t = {\sf Buy}(Y)\\
\amt/r_0 & \text{ if } t = {\sf Sell}(Y)
\end{cases}$

We call $\sigma \geq 0$ the ${\sf Buy}(X)/{\sf Sell}(Y)$-dominant case,
and $\sigma < 0$ the ${\sf Buy}(Y)/{\sf Sell}(X)$-dominant case.
\item 
The ${\sf Buy}(X)/{\sf Sell}(Y)$-dominant case (if $\sigma \geq 0$): 
\begin{enumerate}[leftmargin=5mm,itemsep=1pt]
\item 
Sort ${\bf b}'$ such that all the 
${\sf Buy}(Y)/{\sf Sell}(X)$
orders 
appear in front of the 
${\sf Buy}(X)/{\sf Sell}(Y)$ orders. 
Write the resulting list 
of orders as $\{(t_i, \amt_i, r_i)\}_{i \in [n']}$.
\label{item:sort1}
\item 
Henceforth we assume that there exists an index 
$j \in [n']$ such that 
$\sum_{i = 1}^j \beta(t_i, \amt_i, r_i) = 0$.
If not, 
we can find 
the smallest index 
$j \in [n']$ 
such that $\sum_{i = 1}^j \beta(t_i, \amt_i, r_i) > 0$, 
and split the $j$-th order into two orders 
$(t_j, \amt_{j, 0}, r_j)$
and $(t_j, \amt_{j, 1} , r_j)$, resulting in a new list with $n'+1$
orders, such that  $\amt_{j, 0} + \amt_{j, 1} = \amt_j$, 
and moreover, 
index $j$ of the new list satisfies 
this condition.
\label{item:sort1b}
\item 
\label{item:phase1}
Phase 1: 
Fully execute the first $j$ orders 
at the initial rate $r_0$.
\item 
\label{item:exec1}
Phase 2: 
For each $i \geq j + 1$ in sequence, 
fulfill as much of the $i$-th remaining order as possible, that is, 
pick the largest $\amt \leq \amt_i$ such that  
subject to the constant-function market maker $\potential$, 
the new market rate 
$r \leq r_i$  
if $\amt$ units are to be executed; 
execute $\amt$ units of the $i$-th order.
\end{enumerate}

\item 
The ${\sf Buy}(Y)/{\sf Sell}(X)$-dominant case is symmetric
(if $\sigma < 0$): 
\begin{enumerate} 
\item 
Sort ${\bf b}'$ such that all the 
${\sf Buy}(X)/{\sf Sell}(Y)$ orders 
appear {\color{red}after} the 
${\sf Buy}(Y)/{\sf Sell}(X)$ orders. 
Write the resulting list 
of orders as $\{(t_i, \amt_i, r_i)\}_{i \in [n']}$.
\label{item:sort2}
\item 
Henceforth we assume that there exists an index 
$j \in [n']$ such that 
$\sum_{i = 1}^j \beta(t_i, \amt_i, r_i) = 0$.
If not, we can find 
the smallest index 
$j \in [n']$ 
such that ${\color{red}-} \sum_{i = 1}^j \beta(t_i, \amt_i, r_i) > 0$, 
and split the $j$-th order into two orders 
$(t_j, \amt_{j, 0}, r_j)$
and $(t_j, \amt_{j, 1} , r_j)$, resulting in a new list with $n'+1$
orders, such that  $\amt_{j, 0} + \amt_{j, 1} = \amt_j$, 
and moreover, 
index $j$ of the new list satisfies 
this condition.
\label{item:sort2b}
\item 
\label{item:phase1'}
Phase 1: 
Fully execute the first $j$ orders at the initial rate $r_0$.
\item 
\label{item:exec2}
Phase 2: 
For $i \geq j + 1$ in sequence, 
fulfill as much of the $i$-th remaining order as possible,
that is, 
pick the largest $\amt \leq \amt_i$ such that  
subject to the constant-function market maker $\potential$, 
the new market exchange rate 
$r {\color{red} \geq } r_i$ 
if $\amt$ units are to be executed;  
execute $\amt$ units of the $i$-th order.
\end{enumerate}
\end{enumerate}
\end{mdframed}
\caption{Our swap mechanism}
\label{fig:mech}
\end{figure*}

\section{Our Swap Mechanism}

\subsection{Construction}
Our swap mechanism has two phases. 
In phase 1 (line \ref{item:phase1} and \ref{item:phase1'}), 
the mechanism matches ${\sf Buy}(X)$
orders with ${\sf Buy}(Y)$
orders, and (partially) executes 
them at the initial rate $r_0$, 
such that at the end, there is no change to the initial pool state 
${\sf Pool}(x_0, y_0)$.
Phase 2 (line \ref{item:exec1} and \ref{item:exec2}) 
is a ${\sf Buy}(X)$-only
phase, in which a sequence of 
${\sf Buy}(X)$
orders (or 
${\sf Buy}(Y)$ orders)
are (partially) executed one by one.
In phase 2, when the mechanism attempts to execute
an order, it will execute as much as possible
until either the demand has been fulfilled, or  
the new market price has reached the asking price.
The details of the mechanism 
are described in \Cref{fig:mech}.


In the sorting steps (Lines~\ref{item:sort1} and \ref{item:sort2}\footnote{Note that the sorted order produced by (\ref{item:sort1})
will be consumed in the subsequent steps
(\ref{item:sort1b}), (\ref{item:phase1}), and (\ref{item:exec1}); 
similarly,
the sorted order of (\ref{item:sort2})
are consumed in (\ref{item:sort2b}), (\ref{item:phase1'}), 
and (\ref{item:exec2}).}),
we may need to break ties 
among identical orders.
We suggest two 
approaches for tie-breaking:
\begin{itemize} 
\item 
In the presence of a centralized block proposer (i.e., 
when the arbitrager can 
 be in full control of block creation), 
we suggest {\it random tie-breaking}. 
Note that the random tie-breaking is 
enforced by the mechanism (i.e., the smart contract).
The random coins needed should come from 
a trusted source at the consensus layer,
e.g., through the use of a fair coin toss protocol~\cite{randombeacon00,randombeacon01}.
This 
ensures that the miner or block producer
cannot auction off favorable positions in the block 
to arbitragers or users. 

{Note that in the plain model, we aim to achieve only arbitrage-free 
and not incentive compatibility. 
So although it may seem like a user can adopt a Sybil strategy  
and submit fake bids to game the random tie-breaking,
such strategies do not actually violate  
the arbitrage-free  
property.}. 
\item 
If the block proposal process is decentralized
and ensures sequencing fairness, we suggest
tie-breaking 
according to the time of arrival. \Cref{sec:refinement-orderfair} 
and \Cref{sec:proof-weakic}
show that this approach allows 
us to achieve incentive compatibility in the 
weak fair-sequencing model.
\end{itemize}

\subsection{Proof of Arbitrage Resilience}

We now prove that our swap mechanism satisfies arbitrage resilience
regardless of how ties are broken in 
Lines~\ref{item:sort1} and \ref{item:sort2}. 
As mentioned earlier, the arbitrage resilience
property holds even when the arbitrager 
is in full control of block creation, can 
drop or inject orders,  
and can control the sequencing of orders within the block.

\begin{theorem}[Arbitrage resilience] 
\label{thm:strong-arbitrage}
The swap mechanism in \Cref{fig:mech} satisfies arbitrage resilience.
In particular, this holds no matter how ties are broken
in Lines~\ref{item:sort1} and \ref{item:sort2}. 
\end{theorem}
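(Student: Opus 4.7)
The plan is to reduce the theorem to a simple per-order inequality and then sum it. Let $r_0 := r(x_0, y_0)$ denote the market exchange rate at the initial pool state. I claim that for every order $i$ in the input batch, the user's net outcome $(\delta x_i, \delta y_i)$ from that order satisfies
\[
\delta y_i \;\leq\; -\,r_0 \cdot \delta x_i.
\]
Given this, for any subset $S'$ of orders, summing yields $\sum_{i\in S'} \delta y_i \leq -r_0\sum_{i\in S'}\delta x_i$. Since $\potential$ is increasing we have $r_0 > 0$, so if both sums are nonnegative then $0 \leq \sum_{i\in S'}\delta y_i \leq -r_0\sum_{i\in S'}\delta x_i \leq 0$, forcing both to be zero. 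This contradicts the existence of an arbitrage subset, which by \Cref{defn:arbitrage} requires at least one of the two sums to be strictly positive.

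The per-order inequality is established by partitioning orders into (a) orders filtered out at the very first step, (b) Phase~1 orders, and (c) Phase~2 orders, and handling each case. Filtered orders have $\delta x_i = \delta y_i = 0$ and satisfy the bound trivially. Phase~1 orders (Lines~\ref{item:phase1} and~\ref{item:phase1'}) execute at exactly the initial rate $r_0$ by construction, so $\delta y_i = -r_0\delta x_i$ with equality. For Phase~2 in the ${\sf Buy}(X)/{\sf Sell}(Y)$-dominant branch, every remaining order has type ${\sf Buy}(X)$ or ${\sf Sell}(Y)$, so $\delta x_i \geq 0$; moreover, because the $\beta$-sums cancel at the split index $j$, Phase~1 leaves the pool at $(x_0, y_0)$, so Phase~2 slides the pool state along the level curve $\potential = \potential(x_0, y_0)$ in the direction of decreasing $x$. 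Concavity together with monotonicity of $\potential$ implies that this level curve, written as $y = f(x)$, is the boundary of a convex super-level set and hence $f$ is convex; consequently the market rate $r(x,y) = -f'(x)$ is non-decreasing as $x$ decreases. Hence each Phase~2 order trades at an average rate $r_i \geq r_0$, giving $\delta y_i = -r_i\delta x_i \leq -r_0\delta x_i$ since $\delta x_i \geq 0$. The ${\sf Buy}(Y)/{\sf Sell}(X)$-dominant branch is symmetric: there $\delta x_i \leq 0$ and the effective rate is at most $r_0$, and multiplying by the non-positive $\delta x_i$ flips the rate inequality so that $\delta y_i \leq -r_0\delta x_i$ still holds.

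Two loose ends should be checked explicitly. First, the possible split of the $j$-th order in Lines~\ref{item:sort1b} and~\ref{item:sort2b} is harmless: each half independently falls into either Phase~1 or Phase~2 and individually obeys the rate constraint of its phase, so the inequality is preserved on each half. Second, nothing in the argument references the relative order among same-side orders, so the conclusion is independent of how ties are broken in Lines~\ref{item:sort1} and~\ref{item:sort2}, matching the theorem's claim. The main obstacle I anticipate is pinning down the rate-monotonicity step along the Phase~2 trajectory; this is a convex-analysis statement about the level curves of $\potential$ that must be derived cleanly from the stated assumptions (differentiable, increasing, concave), but once in hand the rest of the proof is essentially a two-line summation.
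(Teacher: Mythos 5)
Your proposal is correct and takes essentially the same route as the paper: decompose by phase, note Phase~1 trades at exactly $r_0$ and Phase~2 at rates no better than $r_0$ via increasing marginal cost (the paper's \Cref{fact:inc-marginal-cost}, whose proof matches your level-curve convexity sketch), and conclude no subset can gain in both assets. The only difference is that you make the paper's implicit final step explicit through the per-order inequality $\delta y_i \leq -r_0\,\delta x_i$ and summation, which is a clean way to write it.
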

\begin{proof}
We prove it for the 
${\sf Buy}(X)/{\sf Sell}(Y)$-dominant case, 
since the 
${\sf Buy}(Y)/{\sf Sell}(X)$-dominant case
is symmetric.
The mechanism essentially does the following. In phase 1, 
it (partially) executes a set of orders all at the 
initial rate $r_0$, such that there is no change to the initial pool state 
${\sf Pool}(x_0, y_0)$. In phase 2, it executes only 
${\sf Buy}(X)/{\sf Sell}(Y)$ orders. Due to increasing marginal cost (\Cref{fact:inc-marginal-cost}), 
\elaine{TODO: state the lemma earlier}
in Phase 2, all the (partially) executed ${\sf Buy}(X)/{\sf Sell}(Y)$ orders 
enjoy a rate that is at least $r_0$. 
Therefore, all the (partially) executed ${\sf Buy}(Y)/{\sf Sell}(X)$
orders 
enjoy a rate of $r_0$, and all 
the (partially) executed
${\sf Buy}(X)/{\sf Sell}(Y)$ orders enjoy a rate that is $r_0$ or greater.
Thus, it cannot be the case that 
there is a net gain in one asset without any loss in the other.
\ignore{
\elaine{FILL}
Fix an arbitrary initial state ${\sf Pool}(x_0,y_0)$ and an arbitrary order vector from all other users ${\bf b}_{-u}$.
Consider an arbitrary set of orders ${\bf b}_{u}$ whose joint outcomes result in $\delta x \geq 0$.
Without loss of generality, assume that 
we have the ${\sf Buy}(X)$/${\sf Sell}(Y)$-dominant case 
given ${\bf b}_{-u}$ and ${\bf b}_u$ (the 
${\sf Buy}(Y)$/${\sf Sell}(X)$-dominant case
is symmetric).
\myparagraph{Case 1: $\delta x = 0$.} If no order in ${\bf b}_u$ is executed, then the joint outcomes result in $\delta y = 0$.
Thus, we assume that all the ${\sf Sell}(X)$/${\sf Buy}(Y)$-type orders in ${\bf b}_u$ has a joint outcome $(\delta x_s, \delta y_s)$, whereas all the ${\sf Buy}(X)$/${\sf Sell}(Y)$-type orders in ${\bf b}_u$ has a joint outcome $(\delta x_b, \delta y_b)$.
}
\end{proof}

\subsection{A Refinement of the Mechanism for the Weak Fair-Sequencing Model}
\label{sec:refinement-orderfair}

\myparagraph{Why the mechanism is NOT strategy-proof in the plain model.}
As mentioned, our swap mechanism 
in \Cref{fig:mech} 
does not fully specify 
how to break ties in the sorting steps of  
in Lines~\ref{item:sort1} and \ref{item:sort2}. 
Moreover, our arbitrage resilience property
does not care how the tie-breaking is done. 

However, 
if we are not careful about the tie-breaking
the resulting mechanism may not satisfy incentive compatibility. 
For example, imagine that 
the tie-breaking is based on the $\alpha$ field
of the order, which encodes the time at which the order
is submitted.  
In this case, a strategic miner or user  
can simply 
make its own order have a small $\alpha$ to enjoy
a better price, rather than truthfully reporting 
$\alpha$.

\myparagraph{Refinement in the fair-sequencing model.}
In the weak fair-sequencing model, 
the arrival order $\alpha$ is decided by an underlying
consensus protocol 
that provides some form of sequencing fairness.
In other words, the consensus protocol itself
records the approximate time at which each order is first seen. 
In practice, this arrival time is dependent
on when the user submits the 
its order to the network, and its network delay. 
A sequencing-fair consensus protocol
cannot prevent a user (or miner) from delaying
the submission of its order. 
It also cannot prevent a strategic 
user with an extremely fast network from front-running
a victim's order. 
Specifically, the strategic user can still observe
what the victim 
submits, and then immediately submits a dependent order. 
If the strategic user's network is faster than the victim's,
the strategic order may have an earlier arrival time than the victim's 
order!
However,
sequencing fairness 
from the consensus does tie the hands 
of strategic players in the following weak
manner (hence the name ``weak sequencing fair''):
if a strategic user's order is generated at time $\alpha$, it cannot
pretend that the order was generated at $\alpha' < \alpha$
even if it has an extremely fast network with a delay of $0$. 
\elaine{move earlier?}

We can refine the mechanism in 
\Cref{fig:mech}
as follows to achieve incentive compatibility 
in the weak fair-sequencing model  
(see also \Cref{sec:proof-weakic}):
\begin{itemize} 
\item[]
For tie-breaking in the sorting steps
(Lines~\ref{item:sort1} and \ref{item:sort2}), we now require
that 
the sorting 
be stable, that is, in the sorted outcome, 
the relative ordering among all 
identical ${\sf Buy}(Y)$
orders must respect the arrival order encoded
in the $\alpha$ field; 
the same holds for all 
${\sf Buy}(X)$ orders. 
\end{itemize}

We will assume
that a user's honest strategy is to honestly report its type
including the $\alpha$ field, that is,
$HS(t, \amt, r, \aux)$
simply outputs 
a single order 
$(t, \amt, r, \aux)$.

\begin{theorem}[Incentive compatibility in the weak fair-sequencing model] 
Suppose $\potential$ is concave, increasing, and differentiable.
In the weak fair-sequencing model, 
the above refined swap mechanism 
is incentive compatible (see \Cref{defn:ic}).
\label{thm:weakic}
\end{theorem}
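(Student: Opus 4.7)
The plan is to prove \Cref{thm:weakic} by fixing an initial pool state ${\sf Pool}(x_0,y_0)$, an arbitrary honest order vector ${\bf b}_{-u}$ from other users, and an intrinsic type $T=(t,v,r,\alpha)$ for user $u$. Denote by $(\delta x_h,\delta y_h)$ the outcome of user $u$ when it plays $HS(T)=\{(t,v,r,\alpha)\}$, and by $(\Delta x,\Delta y)$ the joint outcome when $u$ plays an arbitrary strategic vector ${\bf b}'$, all of whose entries carry auxiliary time $\alpha'\ge\alpha$. The goal is to show that $(\Delta x,\Delta y)$ does not strictly dominate $(\delta x_h,\delta y_h)$ in the partial ordering $\preceq_T$. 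Exploiting the symmetry between $X$/$Y$ and between buy/sell in the mechanism, it suffices to handle $t=\msf{Buy}(X)$; the three other intrinsic types reduce to this case by relabelling assets and directions.

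The core technical ingredient I would establish first is a \emph{placement monotonicity} property enabled by the refined stable sort. Because every order in ${\bf b}'$ carries $\alpha'\ge\alpha$, and because the refined mechanism stably orders the $\msf{Buy}(X)$ orders (resp.\ the $\msf{Buy}(Y)$ orders) by arrival time, any $\msf{Buy}(X)$ order of ${\bf b}'$ lies no earlier in the mechanism's sorted list than where the honest order $(\msf{Buy}(X),v,r,\alpha)$ would appear among the $\msf{Buy}(X)$ orders of ${\bf b}_{-u}\cup HS(T)$. Combined with \Cref{fact:inc-marginal-cost} (increasing marginal cost in Phase~2), this implies that among all $\msf{Buy}(X)$ orders user $u$ could conceivably place via ${\bf b}'$, the single honest order is guaranteed to be executed at the best available marginal rate schedule.

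With placement monotonicity in hand, I would do a structural case analysis on ${\bf b}'$. A deviation that replaces the honest order by a single $(\msf{Buy}(X),v',r',\alpha')$ with $(v',r')$ below $(v,r)$ simply shrinks the fill and is Pareto-worse, while raising $v'$ or $r'$ only exposes $u$ to extra units executed at marginal rate exceeding $r$ or overshooting the intrinsic demand $v$; both are incomparable to (and not dominating) the honest outcome. Splitting into several $\msf{Buy}(X)$ orders is controlled by placement monotonicity: the union of fills at marginal rate at most $r$ is bounded above by the honest fill, so any additional units again come at bad marginal rate. Injecting opposite-side $\msf{Buy}(Y)/\msf{Sell}(X)$ orders either matches at the uniform Phase~1 rate $r_0$ (effectively trading $u$'s assets at a rate no better than its intrinsic $r$, since $r_0\le r$ in the $\msf{Buy}(X)$-dominant regime) or flips the dominant case; in the flipped regime one reruns the argument with the roles of $X$ and $Y$ swapped. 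An arbitrary mixed deviation decomposes additively across orders, and the net $(\Delta x,\Delta y)$ is a sum of per-order contributions each handled above.

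The hardest step is the injection analysis: quantifying how adding $\msf{Buy}(Y)/\msf{Sell}(X)$ orders to ${\bf b}'$ alters the splitting index $j$ of the mechanism and possibly flips the dominant case, while still tracking $u$'s net position. I expect to resolve this by viewing the mechanism's execution as a walk along the level curve $\Phi(x,y)=\Phi(x_0,y_0)$ and using concavity of $\Phi$ to show that every step of this walk that serves an order of $u$ happens at a marginal rate no better than $r$ for $u$'s net intrinsic demand. Once these pieces are in place, concluding that $(\Delta x,\Delta y)\not\succ_T(\delta x_h,\delta y_h)$ follows by checking each clause of the partial-ordering definition: neither Pareto improvement nor the ``closer to $v$ at marginal rate $\le r$'' clause can hold strictly, because the extra $X$ beyond $\delta x_h$ always comes at marginal rate exceeding $r$ (by rate monotonicity), and any $Y$ savings are precluded by the fact that $u$'s honest order already executed at the best possible position.
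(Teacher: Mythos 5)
Your outline matches the paper's high-level strategy in several places: your ``placement monotonicity'' is essentially the content of \Cref{lem:earlygood} together with the stable, arrival-time tie-breaking, and your final clause-by-clause check against the partial ordering mirrors \Cref{lemma:single_strategic}. The gap is in the middle step, where you claim that ``an arbitrary mixed deviation decomposes additively across orders, and the net $(\Delta x,\Delta y)$ is a sum of per-order contributions each handled above.'' This decomposition is not additive: the orders in ${\bf b}'$ interact through the mechanism's global quantities. Injecting a ${\sf Sell}(X)$ order increases the Phase-1 matched volume and shifts the splitting index $j$, which can move part of $u$'s own ${\sf Buy}(X)$ demand (and other users' buy orders) from Phase~2, where it would execute at rates above $r_0$, into Phase~1 at the flat rate $r_0$; it can also flip the dominant case entirely. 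So the benefit of an injected order is not captured by that order's own fill, and a per-order accounting does not telescope. The paper's proof deals with exactly this interaction in \Cref{claim:one} (after first reducing, via \Cref{fact:prime_asset} and \Cref{claim:coalesce}, to one buy plus one sell order): it re-expresses the two executions as aligned phases (Phase 1, 2a, 2b, 2c versus 1', 2a', 2b', 2c') and proves the comparative inequalities ${\sf Pay}(\text{Phase 1})+{\sf Pay}(\text{Phase 2b})\geq 0$ and ${\sf Pay}(\text{Phase 2c})\geq{\sf Pay}(\text{Phase 2c'})$, which is where the real work lies.

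Moreover, the lemma you propose to prove for the injection analysis --- that ``every step of the walk that serves an order of $u$ happens at a marginal rate no better than $r$'' --- is false as stated: in the ${\sf Buy}(X)$-dominant case with $r\geq r_0$, any portion of $u$'s buy demand served in Phase~1 executes at $r_0$, which is a rate strictly better than $r$ in general (this is also what the honest order enjoys). The statement you actually need is comparative, not absolute: relative to the honest outcome, the \emph{extra} units obtained by the deviation come at marginal rate worse than $r$ (or overshoot the demand, yielding incomparability), and any sell-then-buy-back round trip induced by injected opposite-side orders incurs a non-negative loss in $Y$. Establishing that comparative statement in the presence of the $j$-shifting effect is precisely the content of \Cref{lem:coalesce}; without a replacement for it, your level-curve/concavity sketch does not yet close the argument. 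Concavity and \Cref{fact:inc-marginal-cost} alone give rate monotonicity within Phase~2, but they do not by themselves bound the gain from relocating volume across the Phase-1/Phase-2 boundary.
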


The proof of \Cref{thm:weakic}
is 
provided in \Cref{sec:proof-weakic}.

\section{Proof of Incentive Compatibility in the Weak Fair-Sequencing Model}
\label{sec:proof-weakic}

\elaine{TODO: remove the Sell}

\subsection{Useful Facts}
We first prove a few useful facts.

\begin{fact}[Increasing marginal cost]
Suppose that $\potential$ is 
increasing, differentiable, and concave. 
Given two pool states ${\sf Pool}(x, y), {\sf Pool}(x', y')$
such that $\potential(x, y) = \potential (x', y')$, and $x' \leq x$,
it must be that 
$r(x, y) \leq r(x', y')$.
In other words, the price of $X$ goes up 
if the pool has less supply of $X$.
\label{fact:inc-marginal-cost}
\end{fact}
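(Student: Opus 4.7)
The plan is to parametrize the level set $\{(x,y) : \potential(x,y) = c\}$ (for $c = \potential(x_0,y_0) = \potential(x',y')$) as the graph of a single-variable function $y = f(x)$, and then reduce the claim to showing that $f$ is convex with $f'(x) = -r(x, f(x))$. If this is established, then $x' \leq x$ gives $f'(x') \leq f'(x)$, which is exactly $-r(x',y') \leq -r(x,y)$, i.e., $r(x,y) \leq r(x',y')$.

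To set this up, first I would observe that since $\potential$ is increasing and differentiable, both partial derivatives $\potential_x, \potential_y$ are positive on the relevant domain. By the implicit function theorem, for each $x$ in the appropriate range there is a unique $y = f(x)$ with $\potential(x, f(x)) = c$, and implicit differentiation gives
\[
f'(x) \;=\; -\frac{\partial \potential/\partial x}{\partial \potential/\partial y}(x, f(x)) \;=\; -\,r(x, f(x)).
\]
Hence the fact reduces to showing that $f$ is convex on its domain.

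To get convexity of $f$, I would invoke the concavity of $\potential$. The upper contour set $U = \{(x,y) : \potential(x,y) \geq c\}$ is convex: for any two points in $U$ and any $\lambda \in [0,1]$, concavity of $\potential$ gives $\potential(\lambda z_1 + (1-\lambda) z_2) \geq \lambda \potential(z_1) + (1-\lambda)\potential(z_2) \geq c$. Since $\potential$ is strictly increasing in $y$, for each fixed $x$ the slice $\{y : \potential(x,y) \geq c\}$ is exactly $[f(x), \infty)$, so $U$ coincides with the epigraph of $f$. A function whose epigraph is convex is itself convex, so $f$ is convex and $f'$ is non-decreasing; plugging $x' \leq x$ into the formula $f' = -r$ finishes the argument.

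The main obstacle, such as it is, is the bookkeeping around the implicit representation: ensuring that the level curve really is the graph of a function $y = f(x)$ on the relevant interval, and that the upper contour set agrees with the epigraph of $f$. Both points follow from $\potential$ being strictly increasing in $y$ (so the slice $\{y : \potential(x,y) \geq c\}$ is a genuine upward ray $[f(x),\infty)$) and from $\potential_y > 0$ (so the implicit function theorem applies and the formula $f'(x) = -r(x,f(x))$ is valid). Once these are in place, the concavity of $\potential$ does all the real work via the convex-upper-contour argument.
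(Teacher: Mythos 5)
Your proposal is correct and follows essentially the same route as the paper: parametrize the level curve as a decreasing function $y=f(x)$, note $f'(x) = -r(x,f(x))$, and deduce monotonicity of $r$ from convexity of $f$. The only difference is that the paper imports these two facts (existence/decrease of the induced function and its convexity under concave $\potential$) from Lemmas B.1 and B.2 of Ferreira--Parkes, while you prove them directly via the implicit function theorem and the convex upper-contour-set (epigraph) argument.
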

\begin{proof}
    Suppose $\potential(x,y) = C$.
    Since $\potential$ is increasing,
    by Lemma B.1 of \cite{credible-ex},
    the potential function $\potential$ defines a bijective decreasing function $h(\cdot)$ such that $\potential(z,h(z)) = C$.
    Moreover, since $\potential$ is concave, the induced function $h(\cdot)$ is convex (Lemma B.2 of \cite{credible-ex}). 
    Observe that $\Phi$ is differentiable, so $r(x,y) = -h'(x)$. 
    Therefore, $r(x,y) \leq r(x',y')$ for $x'\leq x$ by the convexity of $h$.
\end{proof}

\begin{fact}[No free lunch]
\label{fact:no_free_lunch}
Suppose $(\delta x, \delta y)$ is the outcome resulting from
the execution of a single order in the swap mechanism in \Cref{fig:mech}.
If at least one of $\delta x$ and $\delta y$ is non-zero,
then $\delta x \cdot \delta y < 0$.
\end{fact}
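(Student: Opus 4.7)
The plan is to argue that any single (partial) execution in the mechanism is simply an exchange at some well-defined positive rate $r'$, so $\delta x$ and $\delta y$ are forced to have strictly opposite signs whenever the execution is nontrivial. Concretely, I would case on which phase the order is (partially) executed in and what its type is, and show in every case that (i) the execution rate $r' \in (0,\infty)$ and (ii) $\delta y = -r' \cdot \delta x$ (when the primary asset is $X$) or $\delta x = -\delta y / r'$ (when the primary asset is $Y$).

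First, I would establish positivity of the rate. By \Cref{defn:exchange-rate}, $r(x,y) = (\partial \potential/\partial x)/(\partial \potential/\partial y)$, and since $\potential$ is assumed to be increasing and differentiable, both partial derivatives are strictly positive everywhere on the relevant domain, so $r(x,y) \in (0,\infty)$. In Phase~1 (lines~\ref{item:phase1} and \ref{item:phase1'}), each order is (partially) filled at the initial rate $r_0 = r(x_0,y_0) > 0$. In Phase~2 (lines~\ref{item:exec1} and \ref{item:exec2}), an order is executed along the curve $\potential(x,y) = \potential(x_0,y_0)$ between two intermediate pool states; by \Cref{fact:inc-marginal-cost}, the marginal rate along this sub-execution is bounded between the starting and ending market rates, both of which are strictly positive. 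Hence the effective average exchange rate $r'$ for that order is also strictly positive and finite.

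Second, I would do the sign bookkeeping by order type. For ${\sf Buy}(X)$ or ${\sf Sell}(Y)$ orders that are (partially) executed, the user receives a positive quantity $\delta x > 0$ of $X$ in exchange for $\delta y = -r' \delta x < 0$ units of $Y$; symmetrically, for ${\sf Buy}(Y)$ or ${\sf Sell}(X)$ orders, the user receives $\delta y > 0$ and pays $\delta x = -\delta y / r' < 0$. In particular, the degenerate case $\delta x = 0$ with $\delta y \neq 0$ (or vice versa) cannot occur: since $r'$ is finite and positive, $\delta x = 0$ forces $\delta y = 0$, contradicting the hypothesis that at least one of them is non-zero. Therefore in any nontrivial execution, $\delta x$ and $\delta y$ are both non-zero and have strictly opposite signs, yielding $\delta x \cdot \delta y < 0$.

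The only real subtlety is ruling out pathological ``one-sided'' outcomes ($\delta x = 0$ while $\delta y \neq 0$, etc.); this is handled precisely by the assumption that $\potential$ is increasing, which forces $r' \in (0,\infty)$ at every point on the AMM curve and in Phase~1. Beyond that, the statement is a direct consequence of the mechanism's definition, which never moves assets except through a rate-mediated exchange along the constant-potential curve (Phase~2) or at the fixed rate $r_0$ (Phase~1).
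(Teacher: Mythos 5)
Your argument is correct, and in fact the paper states this fact without giving any proof at all (it is treated as immediate from the construction), so your writeup supplies the justification the paper leaves implicit: every nontrivial execution, whether at the fixed rate $r_0$ in Phase~1 or along the constant-potential curve in Phase~2, is a rate-mediated exchange at a strictly positive, finite rate, which forces $\delta x$ and $\delta y$ to have opposite signs. One small remark: your Phase~2 case leans on pointwise strict positivity of the marginal rate, which requires reading ``increasing'' as strictly increasing with positive partial derivatives (this is clearly the intended reading, since otherwise the ratio in \Cref{defn:exchange-rate} and the fact itself for Phase~1 executions at $r_0$ would already be problematic). A marginally more robust route, using only the machinery the paper itself invokes in the proof of \Cref{fact:inc-marginal-cost}, is to note that the level set $\potential(z,h(z))=C$ defines a bijective, hence strictly decreasing, function $h$; a Phase~2 execution of $\amt'>0$ units moves the pool strictly along this curve, so the pool strictly loses one asset and strictly gains the other, and the user's $(\delta x,\delta y)$ is the negation of the pool's change. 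This sidesteps any worry about the marginal rate vanishing at isolated points while giving the same conclusion. Either way, the proof is sound.
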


\subsection{Proof}

We now prove 
\Cref{thm:weakic}.
Our mechanism is deterministic, so a deterministic strategy
yields a deterministic outcome. 
Recall that a user has a partial ordering among the outcomes.
Henceforth, if 
two outcomes satisfy $(\delta x_1, \delta y_1) \succeq (\delta x_2, \delta y_2)$,
we say that 
$(\delta x_1, \delta y_1)$ is {\it at least as good as} $(\delta x_2, \delta y_2)$.

Suppose that the strategic user $u$'s type is $(t^*, \amt^*, r^*, \aux^*)$, 
its strategic order vector is 
${\bf b}_u :=  \{(t_j, \amt_j, r_j, \aux_j)\}_j$, 
the initial state is ${\sf Pool}(x_0, y_0)$, 
and 
the order vector from all other users is ${\bf b}_{-u}$.

\begin{fact}
\label{fact:prime_asset}
    Given the initial state ${\sf Pool}(x_0, y_0)$, the order vector from other users ${\bf b}_{-u}$, and a strategic order vector ${\bf b}_u :=  \{(t_j, \amt_j, r_j, \aux_j)\}_j$,
    there exists an alternative strategic vector ${\bf b}'_u$ which contains only ${\sf Buy}(X)$ and ${\sf Sell}(X)$-type (or ${\sf Buy}(Y)$ and ${\sf Sell}(Y)$-type) order, such that the outcome of ${\bf b}'_u$ is the same as the outcome of ${\bf b}_u$.
\end{fact}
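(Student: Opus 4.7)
My plan is to build ${\bf b}'_u$ constructively by replacing each $Y$-primary order in ${\bf b}_u$ with one or more $X$-primary orders that together reproduce the original's trade effect on user~$u$. First I run the mechanism on the original input $({\sf Pool}(x_0,y_0),{\bf b}_{-u}\cup{\bf b}_u)$ and record, for each order in ${\bf b}_u$, both the $X$-volume $\delta x_j$ it realizes and whether it is processed entirely in phase~1 at rate $r_0$, entirely in phase~2 along the bonding curve, or split by step~\ref{item:sort1b} between the two. A ${\sf Sell}(Y)$ fully executed in phase~1 becomes a ${\sf Buy}(X)$ with amount $\amt_j/r_0$ (matching $\beta = \amt_j/r_0$); a ${\sf Sell}(Y)$ fully in phase~2 becomes a ${\sf Buy}(X)$ with amount $\delta x_j$; a ${\sf Sell}(Y)$ split by the mechanism at index $j$ is pre-split in ${\bf b}'_u$ into two ${\sf Buy}(X)$ pieces of amounts $\amt_0/r_0$ and $\delta x_1$. ${\sf Buy}(Y)$ orders are handled symmetrically as ${\sf Sell}(X)$, and $X$-primary orders are kept unchanged.

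Next I argue by induction on the processing steps that running the mechanism on ${\bf b}_{-u}\cup{\bf b}'_u$ traces the same pool-state trajectory, which is enough to conclude that user~$u$'s outcome matches. The bipartition of Lines~\ref{item:sort1} and~\ref{item:sort2} is preserved because ${\sf Sell}(Y)\leftrightarrow{\sf Buy}(X)$ both lie in the ${\sf Buy}(X)/{\sf Sell}(Y)$ half and ${\sf Buy}(Y)\leftrightarrow{\sf Sell}(X)$ both lie in the other, so the partition and the dominant case are unchanged. Under the inductive hypothesis that the pool state when a substituted order is processed equals the original, processing ${\sf Buy}(X)$ with amount $\delta x_j$ and rate bound $r_j$ reproduces the original trade exactly: the demand equals $\delta x_j$, which the original ${\sf Sell}(Y)$ had already attained at a post-trade market rate $\leq r_j$, so the substitute executes precisely $\delta x_j$. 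Phase-1 executions at rate $r_0$ match trivially because $X$- and $Y$-volumes at $r_0$ are tied by the fixed factor $r_0$.

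The main technical obstacle is simultaneously preserving three invariants: the sign of $\sigma$ (which selects the dominant case), the cumulative-$\beta$ condition that defines the split index $j$, and the within-group sort order. Phase-1 substitutions preserve $\beta$ exactly, keeping $j$ in place, but the phase-2 substitution ${\sf Sell}(Y)\to{\sf Buy}(X)$ yields a strictly smaller $\beta$ (namely $\delta x_j<\amt_j/r_0$), which could in principle drive $\sigma$ negative and flip the dominant case. I resolve this by appending, at the end of the sort, compensating dummy ${\sf Buy}(X)$ orders with amount $\amt_j/r_0-\delta x_j$ and rate bound $r_0$; such a dummy contributes exactly the missing $\beta$ to $\sigma$, yet because phase-2 rates are strictly above $r_0$ the mechanism can execute zero units of it, so it affects no one's trade. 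For the within-group sort I exploit the freedom to choose auxiliary fields subject to $\aux\geq\aux^*$ together with the stable-sort rule on identical ${\sf Buy}(X)$ orders from \Cref{sec:refinement-orderfair} to place every substituted piece at the position of its original counterpart. With this bookkeeping in place, the inductive step reduces to the one-line observation that, at a fixed pool state, a ${\sf Sell}(Y)$ with $(\amt,r)$ and a ${\sf Buy}(X)$ with $(\delta x,r)$ trigger the same single-step trade under the bonding-curve fulfillment rule.
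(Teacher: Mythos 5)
Your construction is, at its core, the same as the paper's: simulate the deterministic execution on $({\sf Pool}(x_0,y_0),{\bf b}_{-u},{\bf b}_u)$ and replace each $Y$-primary order of user $u$ with an $X$-primary order of the same group and arrival time whose amount equals the $X$-volume actually realized (the paper uses the executed amount $x_{\rm start}-x_{\rm end}$ with rate bound $r_{\rm end}$; you use $\amt_j/r_0$ or $\delta x_j$ with bound $r_j$ — both reproduce the same per-order trade at the same pool state). Where you differ is the extra bookkeeping, and one piece of it is unnecessary: the compensating dummy ${\sf Buy}(X)$ orders. The sign of $\sigma$ cannot flip under your substitution, because every order whose $\beta$ you shrink lies strictly after the split index $j$ (all first-group orders and all phase-1 portions, whose $\beta$ you preserve exactly, lie at positions $\leq j$ where the prefix sum is $0$), and the shrunken $\beta$'s remain non-negative; hence the new $\sigma$ is still $\geq 0$ and both the dominant case and the index $j$ are preserved without any compensation. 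Moreover, your one-line justification that the dummies execute zero units --- ``phase-2 rates are strictly above $r_0$'' --- is not literally true: phase 2 begins at rate exactly $r_0$, and for a merely concave (not strictly concave) potential the rate can stay at $r_0$ for a while. The conclusion can be salvaged (whenever the dummy amount $\amt_j/r_0-\delta x_j$ is positive, either the rate strictly exceeded $r_0$ during that order's execution and hence at the end of phase 2, or the order stopped because any further purchase would push the rate above its bound $r_j\geq r_0$, and convexity of the induced curve then forces the dummy to execute zero from the final state), but as written the step is imprecise; the cleaner fix is simply to drop the dummies, which brings your argument back to the paper's.
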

\begin{proof}
    Since the mechanism is deterministic, 
    given the initial state ${\sf Pool}(x_0,y_0)$, order vector from all other users ${\bf b}_{-u}$, and the strategic order vector ${\bf b}_u$,
    one can compute the whole order execution trace of the mechanism.
    For an order $b = ({\sf Sell}(Y), \amt_j, r_j, \aux_j)$ in ${\bf b}_u$, if it is not executed, then it can be replaced with a $({\sf Buy}(X), 0, r_j, \alpha_j)$.
    If it is partially fulfilled,
    let $(x_{\rm start}, x_{\rm end})$ denote the amount of asset $X$ in the pool right before and right after $b$ is executed, respectively.
    Let $r_{\rm end}$ denote the market exchange rate at $x_{\rm end}$.
    Then $b$ can be replaced with an order $({\sf Buy}(X), x_{\rm start}- x_{\rm end}, r_{\rm end}, \alpha_j)$, without changing the outcome. 
    Similarly, we can replace a ${\sf Buy}(Y)$-type order with a ${\sf Sell}(X)$-type order without changing the outcome.
    The fact thus follows. 
\end{proof}

\begin{lemma}
Suppose $\potential$ is concave, increasing, and differentiable.
For any strategic order vector ${\bf b}_u$, there exists
a single order $b'_u$ 
such that 
1) $b'_u$ results in an outcome at least as good as 
${\bf b}_u$; 
2) the arrival time used in $b'_u$  
is no earlier than the earliest arrival time in ${\bf b}_u$;
and 3) 
either $b'_u = (\_, 0, \_, \_)$ 
or 
$b'_u$ would be completely 
fulfilled 
under ${\sf Pool}(x_0, y_0)$ and ${\bf b}_{-u}$.
\label{lem:coalesce}
\end{lemma}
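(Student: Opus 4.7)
The plan is to reduce any strategic vector ${\bf b}_u$ to a single canonical order placed at the earliest strategic arrival time, then compare the two executions using the two-phase structure of the mechanism together with the increasing-marginal-cost property.

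First I would invoke \Cref{fact:prime_asset} to assume without loss of generality that every order in ${\bf b}_u$ has $X$ as its primary asset, so all orders are of type ${\sf Buy}(X)$ or ${\sf Sell}(X)$. Let $(\delta x, \delta y)$ denote $u$'s net outcome in the original execution. By arbitrage resilience (\Cref{thm:strong-arbitrage}), it cannot simultaneously hold that $\delta x \geq 0$ and $\delta y \geq 0$ with at least one of them strictly positive, so up to symmetry only three cases arise: (i) both $\delta x \leq 0$ and $\delta y \leq 0$; (ii) $\delta x > 0$ and $\delta y < 0$; (iii) $\delta x < 0$ and $\delta y > 0$. Setting $\alpha^* := \min_j \alpha_j$, I would take $b'_u = ({\sf Buy}(X), 0, r_0, \alpha^*)$ in case (i), which yields outcome $(0,0)$ that Pareto-dominates $(\delta x, \delta y)$ and satisfies the ``zero amount'' disjunct of condition~(3); in case (ii) take $b'_u = ({\sf Buy}(X), \delta x, r', \alpha^*)$ with $r'$ chosen as any value at least as large as the market exchange rate the pool would reach after losing $\delta x$ units of $X$ from the state immediately before $b'_u$ is processed in the new execution, which makes $b'_u$ completely fulfilled; case (iii) is symmetric with a ${\sf Sell}(X)$ order and a sufficiently small $r'$. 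Conditions~(2) and~(3) are immediate from the construction.

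The technical heart is showing that in cases (ii) and (iii) the outcome of $b'_u$ Pareto-dominates $(\delta x, \delta y)$; I focus on case (ii) since (iii) is symmetric. Because the tie-breaking is a stable sort by arrival time and the ${\bf b}_{-u}$-orders are identical in both executions, the non-strategic orders that precede $b'_u$ in the new execution's sort are exactly the non-strategic orders that precede the first strategic order in the original execution's sort. Consequently, the pool state immediately before $b'_u$ is processed in the new run coincides with the pool state immediately before the first strategic order in the original run. From that identical state, $b'_u$ consumes $\delta x$ units of $X$ in one contiguous segment of the $\Phi$-trajectory, whereas in the original the strategic orders collectively consumed a net $\delta x$ units of $X$ spread across several non-contiguous segments, each later segment lying on a strictly smaller-$X$ and hence strictly higher-rate portion of the trajectory by \Cref{fact:inc-marginal-cost}. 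A direct integration comparison---of $\int r(x)\,dx$ over a single interval of length $\delta x$ starting at some $x^*$ versus $\int r(x)\,dx$ over equal-total-length intervals lying to the left of $x^*$---yields that $b'_u$'s payment is at most $|\delta y|$, completing the Pareto comparison.

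The main obstacle I anticipate is the Phase~1 accounting when ${\bf b}_u$ contains ${\sf Sell}(X)$ orders that were matched at rate $r_0$ in Phase~1 of the original, since removing them shifts the Phase~1/Phase~2 boundary index $j$. The plan is to exploit the fact that Phase~1 matches are pool-invariant trades at the uniform rate $r_0$: I would attribute to $u$ a gross Phase~1 $Y$-receipt from its ${\sf Sell}(X)$ orders and a gross Phase~1 $Y$-payment from any of its ${\sf Buy}(X)$ orders matched in Phase~1, algebraically combine these with the Phase~2 payment, and then do a short case split on whether $b'_u$ in the new execution falls entirely in Phase~1, entirely in Phase~2, or is split across the boundary by step~(3b) of the mechanism. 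After applying a uniform $r_0$-rate shift to absorb the Phase~1 contributions, the comparison reduces to the Phase~2 integration argument above; note also that because ${\bf b}_u$ has net primary-asset flow $\delta x > 0$ in case~(ii), the new execution remains ${\sf Buy}(X)$-dominant, ruling out a regime flip.
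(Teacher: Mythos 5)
Your high-level route (reduce to $X$-primary orders via \Cref{fact:prime_asset}, then replace ${\bf b}_u$ by one net order at the earliest strategic arrival time and compare payments via increasing marginal cost) is in the same spirit as the paper's proof, but the step that carries all the weight is not actually established. Your central claim --- that the pool state immediately before $b'_u$ is processed in the new run coincides with the pool state immediately before the first strategic order in the original run, so that the comparison reduces to integrating $r(x)$ over a contiguous versus a non-contiguous region of \emph{the same} trajectory --- is false in general. Replacing ${\bf b}_u$ by $b'_u$ changes the aggregate $\sigma$, which the mechanism computes from the \emph{submitted} amounts $\amt$, not from the net executed flow $\delta x$; your remark that ``the new execution remains ${\sf Buy}(X)$-dominant because $\delta x > 0$'' conflates these two quantities, and a regime flip is possible (e.g.\ when the original ${\bf b}_u$ contains a large, only partially filled ${\sf Buy}(X)$ order). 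Even without a flip, removing the user's ${\sf Sell}(X)$ volume moves the Phase-1/Phase-2 boundary index $j$, so other users' ${\sf Buy}(X)$ orders that cleared at $r_0$ in Phase~1 of the original are pushed into Phase~2 of the new execution, and the two runs follow different pool trajectories from that point on. You acknowledge exactly this obstacle in your last paragraph, but the proposed fix (``a uniform $r_0$-rate shift to absorb the Phase~1 contributions'') is a plan, not an argument --- and it is precisely where the paper's proof does its real work: Claim~\ref{claim:coalesce} first coalesces into one buy and one sell using the fact that in the dominant case all sell orders clear at $r_0$, and Claim~\ref{claim:one} then cancels the pair with an explicit phase-by-phase decomposition (Phases 1, 2a, 2b, 2c versus 1$'$, 2a$'$, 2b$'$, 2c$'$) that tracks the residual of other users' orders displaced from Phase~1 and compares the payments term by term.

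Two further gaps: your case (iii) is not ``symmetric'' to case (ii) within a fixed regime --- a net seller in the ${\sf Buy}(X)$-dominant case sells flat at $r_0$ while a net buyer pays increasing rates, so the cross cases (net seller in a buy-dominant run, and vice versa) need their own short arguments rather than an appeal to symmetry; and the choice of $r'$ in case (ii) is defined circularly in terms of the new execution (taking $r' = +\infty$, as the paper does in \Cref{lem:earlygood}, avoids this). None of these is fatal to the overall strategy, but as written the proof of the payment comparison --- the heart of the lemma --- is missing, and repairing it essentially requires reproducing the accounting in Claims~\ref{claim:coalesce} and~\ref{claim:one}.
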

\begin{proof}

To prove this lemma, we first show that we can coalesce
all the ${\sf Buy}(X)/{\sf Sell}(Y)$-type orders into one,
and all the ${\sf Sell}(X)/{\sf Buy}(Y)$-type
orders into one, as stated in the following claim.

\begin{claim}
\label{claim:coalesce}
Suppose $\potential$ is concave, increasing, and differentiable.
For any strategic order vector ${\bf b}_u$, 
if it contains both ${\sf Buy}(X)/{\sf Sell}(Y)$-type
and ${\sf Sell}(X)/{\sf Buy}(Y)$-type
orders, 
there exists
another order vector ${\bf b}'_u$ which
contains a single 
${\sf Buy}(X)/{\sf Sell}(Y)$-type
order 
and 
a single 
${\sf Sell}(X)/{\sf Buy}(Y)$-type
order 
such that  
1) 
${\bf b}'_u$ results in an outcome at least as good as ${\bf b}_u$;
2) the arrival times used in ${\bf b'}_u$  
are no earlier than the earliest arrival time in ${\bf b}_u$; 
and 3) an order in ${\bf b}'_u$ is either of the form $(\_, 0, \_, \_)$ 
or it would be completely 
fulfilled 
under ${\sf Pool}(x_0, y_0)$ and ${\bf b}_{-u}$.
\end{claim}
\begin{proof}
According to \cref{fact:prime_asset}, we can assume that ${\bf b}_u$ contains only ${\sf Buy}(X)$ and ${\sf Sell}(X)$-type orders.
Let ${\bf b}_{\sf Sell}$ and ${\bf b}_{\sf Buy}$ denote the vector 
of ${\sf Sell}(X)$-type and ${\sf Buy}(X)$-type orders in ${\bf b}_u$, respectively.
Without loss of generality, assume that given ${\bf b}_{-u}$ and ${\bf b}_{u}$, we have 
the ${\sf Buy}(X)$/${\sf Sell}(Y)$-dominant case, 
(the ${\sf Buy}(Y)$/${\sf Sell}(X)$-dominant case is symmetric).

For the ${\sf Buy}(X)$/${\sf Sell}(Y)$-dominant case, 
all ${\sf Sell}(X)$-type orders in ${\bf b}_{\sf Sell}$ will be executed at the initial exchange rate $r_0$.
Therefore, consider an order $b = ({\sf Sell}(X), \amt, r, \alpha)$, where $\amt$ denotes the total units of orders in ${\bf b}_{\rm Sell}$, $r$ is the minimum asking rate in ${\bf b}_{\rm Sell}$, and $\alpha$ is the earliest arrival time in ${\bf b}_u$.
Then a strategic order vector $({\bf b}_{\rm Buy}, b)$ results in the same outcome as ${\bf b}_u$.

Now consider an order $b' = ({\sf Buy}(X), \amt', r', \alpha')$, where $\amt'$ is the total units of order executed in ${\bf b}_{\rm Buy}$, $r'$ is the maximum asking rate in ${\bf b}_{\rm Buy}$,  and $\alpha'$ is the earliest arrival time of the order in ${\bf b}_{\rm Buy}$ that is partially fulfilled.
Let ${\bf b}'_u = (b', b)$. 
Compared to the units executed in ${\bf b}_{\rm Buy}$, the units executed in $b'$ have an earlier or the same arrival time.
\ke{ref to \Cref{lem:earlygood}?}
In addition, the asking rate $r'$ in $b'$ is larger than or equal to that in ${\bf b}_{\rm Buy}$. 
Therefore, all units in $b'$ will be executed.
Moreover, by the increasing marginal cost (\cref{fact:inc-marginal-cost}), the exchange rate for $b'$ is no more than the average exchange rate for all orders in ${\bf b}_{\rm Buy}$.
This means that ${\bf b}'_u$ results in an outcome that is at least as good as $({\bf b}_{\rm Buy}, b)$ according to the partial ordering.
By the transitivity, ${\bf b}'_u$ results in an outcome that is at least as good as ${\bf b}_u$.
\end{proof}

Next, we show that for any order vector ${\bf b}_u$ that contains a single 
${\sf Buy}(X)/{\sf Sell}(Y)$-type
order
and
a single
${\sf Sell}(X)/{\sf Buy}(Y)$-type
order, we can remove the part that ``cancels off'', and 
substitute it with a single order. This is formally stated
in the following claim.

\begin{claim}
\label{claim:one}
Suppose $\potential$ is concave, increasing, and differentiable.
For any order vector ${\bf b}_u$ that contains a single 
${\sf Buy}(X)/{\sf Sell}(Y)$-type order
and a single ${\sf Sell}(X)/{\sf Buy}(Y)$-type, 
there exists a single order $b'_u$  
such that 
1) $b'_u$ results in an outcome at least as good as ${\bf b}_u$;
2) 
the arrival time in 
$b'_u$  
is no earlier than the earliest arrival time in  
${\bf b}_u$; 
and 3) ${b}'_u$ is either of the form $(\_, 0, \_, \_)$ 
or it would be completely 
fulfilled 
under ${\sf Pool}(x_0, y_0)$ and ${\bf b}_{-u}$.
\end{claim}

\begin{proof}
Because of \cref{fact:prime_asset},
we assume that 
the strategic order vector ${\bf b}_u$ contains a single 
${\sf Buy}(X)$-type order $({\sf Buy}(X), \amt_b, r_b, \alpha_b)$ and a single 
${\sf Sell}(X)$-type order 
$({\sf Sell}(X), \amt_s, r_s, \alpha_s)$. By our assumption,  
both orders are fully executed.
If either $\amt_b$ or $\amt_s$ is zero, 
the result follows trivially;  
henceforth, we assume that both are non-zero.
Similarly, if either $r_b < r_0$ or 
$r_s > r_0$, then the result also follows trivially. 
henceforth, 
$r_b \geq r_0$ and $r_s \leq  r_0$.

We prove it assuming the 
${\sf Buy}(X)$/${\sf Sell}(Y)$-dominant case
case 
under the strategic order
vector ${\bf b}_u$, 
since the argument for the 
${\sf Buy}(Y)$/${\sf Sell}(X)$-dominant case
 is symmetric.
For the ${\sf Buy}(X)$/${\sf Sell}(Y)$-dominant case, 
if it were possible to execute all orders at $r_0$,
then there is more demand in terms of  
only ${\sf Buy}(X)/{\sf Sell}(Y)$ than 
${\sf Sell}(X)/{\sf Buy}(Y)$. 
In this case, 
Phase 1 executes 
all ${\sf Sell}(X)/{\sf Buy}(Y)$ orders at $r_0$, and  
Phase 2 executes 
only ${\sf Buy}(X)/{\sf Sell}(Y)$ orders. 

\myparagraph{Case $\amt_s \geq \amt_b$:}
Under the original ${\bf b}_u$, 
the user would sell $\amt_s$ units of $X$ at $r_0$
and would buy $\amt_b$ units of $X$ at a rate of $r_0$ or greater.
Now, suppose we replace ${\bf b}_u$ with a single order 
$b'_u = ({\sf Sell}(X), \amt_s - \amt_b, r_s, \alpha_s)$.
Under $b'_u$, it is still the case that all  
${\sf Sell}(X)$ orders are completely executed at $r_0$.

Hence, we can decompose the original
${\bf b}_u$ equivalently into the following steps:
(i) first execute $b'_u $; (ii) sell $\amt_b$ units at rate $r_0$;
(iii) buy back $\amt_b$ units at rate $r_0$ or greater.
Since steps (ii) and (iii) together will incur a non-negative
loss in $Y$ (but create no change in $X$),
user $u$'s outcome under $b'_u$ 
is at least as good as ${\bf b}_u$.

\myparagraph{Case $\amt_s < \amt_b$:}
Consider the original ${\bf b}_u$
which consists of 
$({\sf Sell}(X), \amt_s, r_s, \alpha_s)$
and $({\sf Buy}(X), \amt_b, r_b, \alpha_b)$.
We will analyze what happens when we replace these two orders with 
a single order $b'_u = ({\sf Buy}(X), \amt_b - \amt_s, r_b, \alpha_b)$.
Suppose in Phase 1 of the execution with ${\bf b}_u$, user $u$ sells $\amt_s$ units of $X$ 
and buys $\amt'\leq \amt_b$ units of $X$ at rate $r_0$.  We separate the rest of the proof into two cases.
\begin{enumerate} 

\item \textbf{Case $\amt' \geq \amt_s$.} 
For the execution with ${\bf b}_u$:
In Phase 1, the net effect is to buy $\amt' - \amt_s$
units of $X$ at rate $r_0$.
In Phase 2, the user~$u$ buys the remaining
$\amt_b - \amt'$ units starting at rate $r_0$.

The execution with $b'_u$ can be viewed as follows:
In Phase 1', $\amt'-\amt_s$ units of ${\sf Buy}(X)$ will be executed at rate $r_0$.
Then in Phase 2', the mechanism executes the rest $\amt_b - \amt'$ units in $b'_u$ starting at rate $r_0$.

Hence, the two scenarios are equivalent, and the two outcomes are the same.

\item \textbf{Case $\amt' < \amt_s$.} 
We will view the execution of the orders
$({\sf Sell}(X), \amt_s, r_s, \alpha_s)$
and $({\sf Buy}(X), \amt_b, r_b, \alpha_b)$
as follows.

\begin{itemize} 
\item 
{\it Phase 1: } User $u$ sells $\amt_s$ units of $X$ and buy $\amt'$ units of $X$ at a rate of $r_0$,
and it gains $(\amt_s-\amt') \cdot r_0$ units of $Y$ in return.

\item 
{\it Phase 2a:}
Some non-negative amount 
of ${\sf Buy}(X)$/${\sf Sell}(Y)$
orders from other users are executed at a starting market rate of $r_0$,
let $\amt_{\rm other} \geq 0$
be the units of $X$ purchased. 
Note that if $\amt' > 0$, then $\amt_{\rm other} = 0$.

At the end of this phase, the market rate $r_1 \geq r_0$ by
increasing marginal cost.
\item 
{\it Phase 2b:}
Starting at rate $r_1 \geq r_0$, user~$u$ buys
$\amt_s -\amt'$ units of $X$, which changes the market rate to $r_2 \geq r_1$.
\item 
{\it Phase 2c:}
Starting at rate $r_2$,
the user buys $(\amt_b - \amt_s)$ units of $X$,
changing the market rate to $r_3$.
\end{itemize}

The new execution involving  $b'_u$ can be viewed as the following:
\begin{itemize} 
\item 
{\it Phase 1'}: The ${\sf Buy}(X)/{\sf Sell}(Y)$ orders of other users 
executed in the original Phase 1
cannot all be executed in the new Phase 1. 
In particular, the last $\amt_s - \amt'$ units of $X$ cannot be fulfilled in Phase 1',
and will be pushed to Phase 2a' --- henceforth, we call this portion the residual.

\item 
{\it Phase 2a'}: The mechanism attempts to execute 
the residual from Phase 1'
at a starting rate of $r_0$.
The amount fulfilled must be at most $\amt_s-\amt'$.
\item 
{\it Phase 2b'}: 
The mechanism attempts to execute the (partial) orders 
originally considered 
in Phase 2a, at a starting price that is at least $r_0$.
At most $\amt_{\rm other}$ units of $X$
can be fulfilled.
The ending market rate must be at least $r_1$.
\item 
{\it Phase 2c'}: 
The mechanism attempts to execute $b'_u$.
Observe
that the total units of $X$ fulfilled
in the original Phase 2a, 2b, and 2c is $\amt_{\rm other} + \amt_b - \amt'$,
and the total units of $X$ fulfilled in the new
Phase 2a', 2b', and 2c' is at most $(\amt_s - \amt') + \amt_{\rm other} + \amt_b - \amt_s = 
\amt_{\rm other} + \amt_b - \amt'$.
Therefore, it must be that 
all of $b'_u$ can be fulfilled and the ending market rate
is at most $r_3 \leq r_b$.
\end{itemize}

Henceforth, we use the notation ${\sf Pay}(\text{Phase *})$
to denote  user $u$'s payment in terms of $Y$ in some phase.
When the pay is negative, it means a gain in $Y$.
Observe that 
\[
{\sf Pay}(\text{Phase 1}) + {\sf Pay}(\text{Phase 2b})  \geq 0, 
\quad
{\sf Pay}(\text{Phase 2c}) 
\geq 
{\sf Pay}(\text{Phase 2c'}).
\]
Therefore, 
\[
{\sf Pay}(\text{Phase 1}) + {\sf Pay}(\text{Phase 2b})  
+ {\sf Pay}(\text{Phase 2c}) 
\geq 
{\sf Pay}(\text{Phase 2c'}).
\]
Observe that in the above, the left-hand side
represents $u$'s total payment in $Y$ under the original ${\bf b}'_u$,
and 
the right-hand 
side 
represents $u$'s total payment in $Y$ under the new $b'_u$.
\end{enumerate}
\end{proof}
\ignore{
************************************************************

We will view the execution of the orders
$({\sf Sell}(X), \amt_s, r_s, \alpha_s)$
and $({\sf Buy}(X), \amt_b, r_b, \alpha_b)$
as follows. 
\begin{itemize} 
\item 
{\it Phase 1: } User $u$ sells $\amt_s$ units of $X$ at a rate of $r_0$,
and it gains $\amt_s \cdot r_0$ units of $Y$ in return.
In this phase, it is possible that user $u$
also buys $\amt' \in [0, \amt_b]$ units of $X$ at rate $r_0$.
\item 
{\it Phase 2a:}
Some non-negative amount 
of ${\sf Buy}(X)$/${\sf Sell}(Y)$
orders from other users are executed at a starting market rate of $r_0$,
let $\amt_{\rm other} \geq 0$
be the units of $X$ purchased. 
At the end of this phase, the market rate $r_1 \geq r_0$ by
increasing marginal cost.
\item 
{\it Phase 2b:}
Starting at rate $r_1 \geq r_0$, user~$u$ buys
$\amt_s$ units of $X$, which changes the market rate to $r_2 \geq r_1$.
\item 
{\it Phase 2c:}
Starting at rate $r_2$,
the user buys $(\amt_b - \amt_s)$ units of $X$,
changing the market rate to $r_3$.
\end{itemize}

Consider replacing the two orders with
a single order $b'_u = ({\sf Buy}(X), \amt_b - \amt_s, r_b, \alpha_b)$.
The new execution can be viewed as the following:
\begin{itemize} 
\item 
{\it Phase 1'}: The ${\sf Buy}(X)/{\sf Sell}(Y)$ orders of other users 
executed in the original Phase 1
cannot all be executed in the new Phase 1. 
In particular, the last $\amt_s$ units of $X$ cannot be fulfilled in Phase 1',
and will be pushed to Phase 2a' --- henceforth, we call this portion the residual.
\item 
{\it Phase 2a'}: The mechanism attempts to execute 
the residual from Phase 1'
at a starting rate of $r_0$.
The amount fulfilled must be at most $\amt_s$.
\item 
{\it Phase 2b'}: 
The mechanism attempts to execute the (partial) orders 
originally considered 
in Phase 2a, at a starting price that is at least $r_0$.
At most $\amt_{\rm other}$ units of $X$
can be fulfilled.
The ending market rate must be at least $r_1$.
\item 
{\it Phase 2c'}: 
The mechanism attempts to execute $b'_u$.
Observe
that the total units of $X$ fulfilled
in the original Phase 2a, 2b, and 2c is $\amt_{\rm other} + \amt_b$,
and the total units of $X$ fulfilled in the new
Phase 2a', 2b', and 2c' is at most $\amt_s + \amt_{\rm other} + \amt_b - \amt_s = 
\amt_{\rm other} + \amt_b$.
Therefore, it must be that 
all of $b'_u$ can be fulfilled and the ending market rate
is at most $r_3 \leq r_b$.
\end{itemize}

Henceforth, we use the notation ${\sf Pay}(\text{Phase *})$
to denote  user $u$'s payment in terms of $Y$ in some phase.
When the pay is negative, it means a gain in $Y$.
Observe that 
\[
{\sf Pay}(\text{Phase 1}) + {\sf Pay}(\text{Phase 2b})  \geq 0, 
\quad
{\sf Pay}(\text{Phase 2c}) 
\geq 
{\sf Pay}(\text{Phase 2c'})
\]
Therefore, 
\[
{\sf Pay}(\text{Phase 1}) + {\sf Pay}(\text{Phase 2b})  
+ {\sf Pay}(\text{Phase 2c}) 
\geq 
{\sf Pay}(\text{Phase 2c'})
\]
Observe that in the above, the left-hand side
represents $u$'s total payment in $Y$ under the original ${\bf b}'_u$,
and 
the right-hand 
side 
represents $u$'s total payment in $Y$ under the new $b'_u$.}

\Cref{lem:coalesce} follows by combining \Cref{claim:coalesce} and \Cref{claim:one}.
\end{proof}

\begin{lemma}
Suppose $\potential$ is concave, increasing, and differentiable.
Given any 
initial state ${\sf Pool}(x_0, y_0)$, 
any order vector ${\bf b}_{-u}$, 
any true arrival time $\aux^*$ of user $u$, 
given an order $b'_u$ with an arrival
time later than $\aux^*$, there exists
another order $b_u$ 
with an arrival time exactly $\aux^*$, and moreover, 
user $u$'s outcome under $b_u$ 
is at least
as good as its outcome under $b'_u$, 
and $b_u$ 
is either completely executed or of the form $(\_, 0, \_, \_)$.
\label{lem:earlygood}
\end{lemma}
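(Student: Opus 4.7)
The plan is to construct $b_u$ directly from the execution trace of $b'_u$ and then to show that lowering $u$'s arrival time to $\aux^*$ can only weakly improve $u$'s outcome. Let $(\delta x, \delta y)$ denote $u$'s outcome under $b'_u$. If $\delta x = 0$ (in particular, if $b'_u$ already has amount zero), I would set $b_u = (t, 0, r, \aux^*)$ with $(t, r)$ inherited from $b'_u$; this satisfies the lemma trivially with outcome $(0, 0)$. Otherwise, I would let $r_{\mathrm{worst}}$ be the least favorable marginal rate at which any unit of $b'_u$ was fulfilled, and define $b_u$ to share the type of $b'_u$, to have amount $|\delta x|$, rate $r_{\mathrm{worst}}$, and arrival time $\aux^*$.

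The main claim I need to establish is a monotonicity statement: under $b_u$, each of its $|\delta x|$ units executes at a marginal rate at least as favorable to $u$ as the correspondingly indexed executed unit of $b'_u$. To justify this, I would first observe that altering only the arrival time of $u$'s order leaves invariant (i) the dominance case, since $\sigma$ is determined by types, amounts, and $r_0$; (ii) the Phase~1/Phase~2 split index $j$, which is pinned at a unit position equal to the total non-dominant amount; and (iii) the relative ordering across type groups, since type is the primary sort key. Hence, the sole effect of decreasing $\aux$ is to shift $b_u$ to a weakly earlier position $k \leq k'$ within its own type group in the stable sort, and consequently each of $b_u$'s units appears at a weakly earlier position in the unit-by-unit execution sequence. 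By \Cref{fact:inc-marginal-cost} (and its symmetric counterpart for the ${\sf Buy}(Y)/{\sf Sell}(X)$-dominant case), Phase~2 marginal rates evolve monotonically in the direction that is unfavorable to whichever side is dominant, while all Phase~1 units execute at the fixed rate $r_0$, which is weakly at least as favorable to $u$ as any Phase~2 rate in the relevant dominant case. Pairing the $j$-th executed unit of $b_u$ with the $j$-th executed unit of $b'_u$ then yields the desired rate comparison.

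Given this monotonicity, the rest is routine. Because every unit of $b'_u$ that executed lay within $r_{\mathrm{worst}}$, and $b_u$'s unit-wise rates are weakly more favorable, every unit of $b_u$ is within its rate constraint, so $b_u$ is completely executed. Moreover, $u$ realizes the same net change $\delta x$ in $X$ and a weakly better net change in $Y$, placing the two outcomes in case~(a) of the partial ordering $\preceq_T$, as required. The main technical hurdle will be making the unit-by-unit monotonicity rigorous when $b_u$'s range of positions straddles the Phase~1/Phase~2 boundary in one or both executions; here the essential point is that $r_0$ is a uniformly favorable rate relative to every Phase~2 rate, which allows a clean pairing of ``$j$-th units'' even when some units migrate across the boundary as $b_u$ is shifted earlier.
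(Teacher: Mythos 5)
Your proposal is correct and follows essentially the same route as the paper's own proof: both replace $b'_u$ by a single order whose amount equals $b'_u$'s executed amount and whose arrival time is $\aux^*$, and then use increasing marginal cost (\Cref{fact:inc-marginal-cost}) together with the observation that advancing the arrival can only weakly decrease the volume executed ahead of $u$ (Phase~1 units always trading at $r_0$) to conclude that the new order is completely executed at weakly better per-unit rates, i.e., same $\delta x$ and weakly better $\delta y$. The differences are cosmetic: the paper sets the rate to $+\infty$ (making complete execution immediate, whereas your choice of $r_{\mathrm{worst}}$ needs your monotonicity claim) and first normalizes to a ${\sf Buy}(X)$/${\sf Sell}(X)$ order via \Cref{fact:prime_asset} --- a step you should also take, since for a $Y$-primary order the amount field must be the executed amount of $Y$ rather than $|\delta x|$, and since your construction shrinks $u$'s amount relative to $b'_u$ you need one extra line (not just ``amounts are unchanged'') to check that the dominant case and the Phase-1/Phase-2 boundary are unaffected.
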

\begin{proof}
Due to \Cref{fact:prime_asset}, we may assume
that $b_u$ is  
either a ${\sf Buy}(X)$ or ${\sf Sell}(X)$ order. 
We prove it for a ${\sf Buy}(X)$ order, since
the case for a ${\sf Sell}(X)$ order 
is symmetric. 
\ignore{
Let $\delta x$ be the change in $X$ 
under both $b_u$ and $b'_u$. We may assume that $\delta x > 0$
since the case $\delta x = 0$ is trivial. 
We can consider the following cases. 
}
Let $\amt$ be the amount of $X$ bought  
by $b'_u$. We shall assume that $\amt > 0$
since the case $\amt = 0$ is trivial.
Consider an order $b_u = ({\sf Buy}(X), \amt, +\infty, \aux^*)$.
Clearly, $b_u$ will buy $\amt$ units of $X$.

We consider the following cases: 
\begin{itemize} 
\item $b_u$ buys  
$\amt_0 > 0$ units at $r_0$ in Phase 1, and then buys
$\amt_1 \geq 0$ units in Phase 2 at a starting rate of $r_0$ 
and an ending rate of $r_1 \geq r_0$.
In this case, by delaying the arrival time, $b'_u$
can buy at most $\amt'_0 \leq \amt_0$ units in Phase 1 at $r_0$,  
and it needs to buy $\amt - \amt'_0 \geq \amt_1$ in Phase 2. 
Therefore, for $b'_u$, the starting rate in Phase 2 
is $r_0$, and the ending
rate must be at least $r_1$.
Therefore, the average price paid per unit in $b_u$ is no worse
than the
average price paid 
per unit in $b'_u$.
\item 
$b_u$ buys all $\amt$ units in Phase 2.
Since $b'_u$ delayed the arrival, 
the order can be considered no earlier 
than $b_u$. Thus, 
before the mechanism tries to execute $b'_u$ 
at least as many units of $X$ will have been bought
(by all users)
as when the mechanism tries to execute $b_u$.
This means $b'_u$ will have an average price no better than $b_u$. 
\end{itemize}
\end{proof}

Due to \Cref{lem:coalesce,lem:earlygood}, it suffices
to consider strategies that submit a single order,
declare the true arrival 
time $\aux^*$, 
and moreover, either the order has a $0$ amount or
it will be completely executed under ${\sf Pool}(x_0, y_0)$ and ${\bf b}_{-u}$
--- henceforth, we call such strategies 
as {\it admissible, single-order} strategies.
We can complete the proof of \Cref{thm:weakic}
by showing the following lemma.

\begin{lemma}
\label{lemma:single_strategic}
Suppose $\potential$ is concave, increasing, and differentiable.
For any admissible and single-order strategy~$S$, 
the honest strategy results in an outcome that is at least
as good as or incomparable to strategy~$S$. 
\end{lemma}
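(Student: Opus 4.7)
The plan is to reduce the admissible single-order strategy to the same sort-position as the honest order, extract the mechanism's input-output map as a function of $(\amt_u, r_u)$, and then do a small case analysis driven by the two clauses of $\preceq_{T^*}$. WLOG assume $t^* = \text{Buy}(X)$ (the other types are symmetric), and write $HS(T^*) = (\text{Buy}(X), \amt^*, r^*, \aux^*)$ with outcome $(\delta x_H, \delta y_H)$. By \Cref{fact:prime_asset} I may assume the strategic order $S$ has type $\text{Buy}(X)$ or $\text{Sell}(X)$. For $t_S = \text{Sell}(X)$ the outcome satisfies $\delta x_S \leq 0 \leq \delta x_H$, which immediately prevents $(\delta x_H, \delta y_H) \prec_{T^*} (\delta x_S, \delta y_S)$: both clauses of the ordering require $\delta x_H \leq \delta x_S$, and the degenerate case $\delta x_H = \delta x_S = 0$ forces both outcomes to be $(0,0)$.

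For $t_S = \text{Buy}(X)$, the key structural observation is that the stable-sort position of $u$'s order depends only on $\aux^*$ and its type, not on $(\amt_u,r_u)$. Hence there is a ``residual Phase~1 capacity'' $L \geq 0$ and a ``pre-$u$ Phase~2 pool state'' $(x_1,y_1)$ that are determined entirely by ${\bf b}_{-u}$ and the sort position, independently of $(\amt_u,r_u)$. Under either dominance case, $u$'s fulfillment is $\min(\amt_u, L)$ units at rate $r_0$ in Phase~1, followed (if $\amt_u > L$) by a greedy buy from $(x_1,y_1)$ that stops either at the residual demand $\amt_u - L$ or when the marginal rate (which is non-decreasing by \Cref{fact:inc-marginal-cost}) exceeds $r_u$. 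This yields maps $\phi(\amt_u,r_u)$ (total units) and $\psi(\amt_u,r_u)$ (total $Y$ payment), both monotone non-decreasing, such that the incremental marginal cost of any extra unit below the stop point is $\leq r_u$ (and $\leq r_0 \leq r^*$ for the Phase~1 portion).

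Comparing honest $(\amt^*,r^*)$ to strategic $(\amt_S,r_S)$ now splits into three cases. (i) If $\phi(\amt_S,r_S) \leq \phi(\amt^*,r^*) \leq \amt^*$: honest bought at least as much, and every unit honest bought beyond $\phi(\amt_S,r_S)$ had marginal cost $\leq r^*$, giving
\[
\psi(\amt^*,r^*) - \psi(\amt_S,r_S) \leq r^*\bigl(\phi(\amt^*,r^*) - \phi(\amt_S,r_S)\bigr),
\]
so Criterion~2 gives $(\delta x_S,\delta y_S) \preceq_{T^*} (\delta x_H,\delta y_H)$. (ii) If $\phi(\amt^*,r^*) < \phi(\amt_S,r_S) \leq \amt^*$: since honest did not reach its demand, it must have stopped because the marginal rate hit $r^*$, so every unit strategic bought beyond $\phi(\amt^*,r^*)$ cost strictly more than $r^*$; Criterion~2 in the strategic-dominates direction fails, Criterion~1 fails since strategic paid more, so the outcomes are incomparable. (iii) If $\phi(\amt_S,r_S) > \amt^*$: strategic overshoots the demand, so Criterion~2 (strategic $\succeq$ honest) fails via the $\delta x_1 \leq \amt$ clause, and Criterion~1 fails since strategic paid strictly more; again incomparable.

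The main obstacle is a boundary effect: changing $\amt_u$ can flip which side is dominant (e.g., shrinking $\amt_S$ may turn the batch from $\text{Buy}(X)$-dominant to $\text{Sell}(X)$-dominant), and splitting-then-resuming at Phase~2 has to be handled carefully when $L$ straddles $\amt_u$. I plan to fold these into case~(i): when the dominance flips to favor Phase~1, $u$'s entire order executes at $r_0$ with no Phase~2 help, so $\phi$ can only shrink relative to honest, which keeps the comparison in case~(i). Combined with the monotonicity of $\phi,\psi$ and \Cref{fact:inc-marginal-cost} to control marginal costs, this completes the argument, and together with \Cref{lem:coalesce,lem:earlygood} proves \Cref{thm:weakic}.
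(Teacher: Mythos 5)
Your overall route matches the paper's: reduce to a ${\sf Buy}(X)$/${\sf Sell}(X)$ order via \Cref{fact:prime_asset}, use admissibility (same true arrival time, hence the same stable-sort position) plus increasing marginal cost (\Cref{fact:inc-marginal-cost}) to compare executions, and then split according to whether the strategic quantity is below, between, or beyond the honest quantity and the demand $\amt^*$ --- your cases (i)--(iii) and the ${\sf Sell}(X)$ case line up with the paper's Cases 2b, 2c, 2d and Case 1 (which also invokes \Cref{fact:no_free_lunch}). What you add is the explicit factorization through a residual Phase-1 capacity $L$ and a pre-$u$ Phase-2 pool state $(x_1,y_1)$, giving a common payment-versus-quantity map; this makes the marginal-cost comparisons cleaner than the paper's more narrative treatment, and within a fixed dominance case it is correct (the buys ahead of $u$ in the stable sort, and the sell volume, do not depend on $(\amt_u,r_u)$).

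The genuine gap is in your handling of the dominance flip. Your structural claim only holds when the honest run and the strategic run fall in the \emph{same} dominant case, and you patch only one flip direction: the strategic order shrinking the batch into the ${\sf Buy}(Y)/{\sf Sell}(X)$-dominant case, which you correctly fold into case (i). The reverse flip is untreated: the honest run may be ${\sf Sell}(X)$-dominant (so the honest order executes fully at $r_0$), while a strategic order with $\amt_S>\amt^*$ tips the batch into the ${\sf Buy}(X)$-dominant case; then there is no common $L$ or $(x_1,y_1)$, so cases (i)--(iii) as stated do not apply, and in particular the assertion in (iii) that ``strategic paid strictly more'' is not justified by your common-$\Psi$ argument. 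The hole is fixable: in that flip, the Phase-1 capacity reaching $u$ is at least $\amt^*$ (since the other buys total at most the sell volume minus $\amt^*$), so the strategic execution consists of the honest outcome's $\amt^*$ units at $r_0$ plus extra units at marginal rates at least $r_0>0$; hence Clause~1 fails (strictly larger payment) and Clause~2 fails (overshoot of $\amt^*$), and no strict improvement is possible. You should add this branch (and, as in the paper, note that your ``incomparable'' conclusions in (ii) and (iii) really only verify failure of the two direct clauses, which is all the lemma needs). A smaller wording issue: in (ii), ``honest stopped because the marginal rate hit $r^*$'' should also cover the filtered case $r^*<r_0$, where the honest order executes nothing but the same conclusion holds since any strategic purchase starts at rate at least $r_0>r^*$.
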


\begin{proof}

Below we complete the proof of \Cref{lemma:single_strategic}.

We prove it for the case when user $u$'s type is either 
$({\sf Buy}(X), \amt^*, r^*, \aux^*)$ 
or $({\sf Sell}(X), \amt^*, r^*, \aux^*)$. The case
for ${\sf Buy}(Y)$/${\sf Sell}(Y)$ is symmetric.
Given two outcomes 
${\sf out}_0 = (\delta x_0, \delta y_0)$
and ${\sf out}_1 = (\delta x_1, \delta y_1)$ and a true demand $\delta x^*$ for $X$, 
we say that 
they are {\it on the same side} of the goal
$\delta x^*$ iff 
$(\delta x_0 - \delta x^*) \cdot (\delta x_1 - \delta x^*) \geq 0$.
We say that 
${\sf out}_0$ 
is {\it at least as close as}
${\sf out}_1$ 
towards the goal 
iff $|\delta x_0 - \delta x^*| \leq |\delta x_1 - \delta x^*|$,
and we say that 
${\sf out}_0$ 
is {\it closer}
to the goal than 
${\sf out}_1$ 
iff $|\delta x_0 - \delta x^*| < |\delta x_1 - \delta x^*|$.


Our natural partial ordering relation implies the following:
\begin{enumerate} 
\item[R1.]
Suppose $(\delta x_0, \delta y_0)$ 
and $(\delta x_1, \delta y_1)$ 
are on the same side of the goal,
and 
$\delta x_0$ is at least as close as $\delta x_1$ towards the goal. 
Moreover, if $(\delta x_0 - \delta x_1) \cdot (\delta y_1 - \delta y_0) < 0$,
then
$(\delta x_1, \delta y_1) \nsucceq (\delta x_0, \delta y_0)$.

\item[R2.] 
If $\delta x_0$
and 
$\delta x_1$ are on the same side of the goal,
$\delta x_0$ is closer than $\delta x_1$ to the goal, 
and moreover, 
$\delta y_1 - \delta y_0 > r^* \cdot (\delta x_0 - \delta x_1)$, then 
$(\delta x_0, \delta y_0) \nsucceq (\delta x_1, \delta y_1)$.
\elaine{is it just incomparable here?}
\item[R3.]
If $\delta x_0$ and $\delta x_1$ are on different sides of the goal,  
and moreover, 
$\delta y_1 - \delta y_0 \geq r^* \cdot (\delta x_0 - \delta x_1)$, then 
$(\delta x_0, \delta y_0) \nsucceq (\delta x_1, \delta y_1)$.
\elaine{is it just incomparable here?}
\end{enumerate}

Due to \Cref{fact:prime_asset}, we may assume
that the strategic order must be 
of the type 
${\sf Buy}(X)$ or ${\sf Sell}(X)$.  
Further, as shown in the following fact, we can in fact
assume that the strategic order  
adopts the true time of arrival $\aux^*$, i.e., declaring
a later time never helps.

Henceforth, 
let $(\delta x, \delta y)$ and $(\delta x', \delta y')$ denote
the honest and strategic outcomes, respectively.

\myparagraph{Case 1:} Either user $u$ has a true demand of 0 units, 
or the strategic order is opposite the direction of its true demand, i.e.,
if its type is ${\sf Buy}(X)$, it submits a single  
${\sf Sell}(X)$ order; or vice versa. 
It must be that $\delta x \cdot \delta x' \leq 0$.
Further, the honest outcome and strategic outcome
must be on the same side of the true demand, 
and the honest outcome is at least as close 
to the goal as the strategic outcome.
Our mechanism guarantees that 
either (i) $\delta x = \delta x' = 0$, 
or (ii) at least one of $\delta x'$ and $\delta x'$
is non-zero.
In case (i), 
$\delta y = \delta y' = 0$, and the
the two outcomes are the same.

In case (ii),
because of no free lunch (\Cref{fact:no_free_lunch}),
at least one of the inequalities
$\delta x \cdot \delta y \leq 0$ and 
$\delta x' \cdot \delta y' \leq 0$ must be strict; moreover,
when an equality holds, both arguments of the product must be zero.
Because $\delta x \cdot \delta x' \leq 0$,
this implies that
$(\delta x - \delta x')(\delta y - \delta y')  < 0$;
by the above rule R1, the honest outcome is at least as good
as or incomparable the strategic one.

\myparagraph{Case 2:} 
The user $u$ has a non-zero 
amount of true demand, 
and moreover, 
the strategic order 
is in the same direction of the true demand.
We consider the following cases. 
\begin{itemize} 
\item 
{\it Case 2a:} $\delta x = \delta x'$. 
By admissibility, the 
strategic
order declares the same arrival time as the honest one;
hence, if the orders from both strategies get executed
for a non-zero amount, both execution will start at the same market exchange rate.
Hence, it must be the case that 
$(\delta x, \delta y) = (\delta x', \delta y')$.
\elaine{TO FIX: there is an issue of tie breaking if
other orders enjoy the same arrival time}

Henceforth, we assume that $\delta x \neq \delta x'$.
\item 
{\it Case 2b:}
The honest outcome and the strategic outcome are on the same side
of the goal, and 
the honest outcome 
closer to the goal than the strategic outcome;
this case includes the scenario that the honest outcome is exactly at the goal.
Since we can assume that the strategic order
has the same arrival time as the honest order,  
the difference of $|\delta x - \delta x'|$ units are traded at 
a marginal price 
at least as good as $r$ in the honest outcome.
Due to the third rule of the natural partial ordering, 
\elaine{NOTE: hardcoded ref}
the honest outcome is at least as good as the stategic one. 

\item  {\it Case 2c:}
The honest outcome and the strategic outcome are on the same side
of the goal, and the strategic outcome 
is closer to the goal than the honest outcome,
i.e., $|\delta x| < |\delta x'| \leq |\delta x^*| = \amt^*$. 

This means that the honest outcome has not reached the goal of user~$u$.
Under the honest strategy, 
after user $u$'s 
order has been executed (or attempted to be executed), 
the state of the market is such that
if a further non-zero portion of the order is executed,
this portion will incur an average rate of strictly worse than $r^*$.
In the case of ${\sf Buy}(X)$, this is strictly larger than $r^*$;
in the case of ${\sf Sell}(X)$, this is strictly less than $r^*$.

Since the strategic order declares the same arrival
time as the honest order, 
the difference of $|\delta x' - \delta x| > 0$
units must be traded at an average rate strictly worse than 
than $r^*$ 
in the strategic outcome.
\elaine{TODO: need more explanation here.}
By rule R2, 
the strategic outcome is not at least as good
as the honest outcome.
However, because of no free lunch,
the two outcomes are actually incomparable.

\item  {\it Case 2d:}
The honest outcome and the strategic outcome 
are on different sides of the goal,
i.e., $|\delta x| < |\delta x^*| = \amt^* < |\delta x'|$.

In this case, $\delta x \neq 0$.
Similar to case 2c, 
under the honest strategy, 
after user $u$'s 
order has been executed, 
the state of the market is such that
if a further non-zero portion of the order is executed,
this portion will incur an average rate of strictly worse than $r^*$.

Because the 
strategic order declares the same arrival time as the honest one,
the difference of $|\delta x' - \delta x|$
units must be traded at an average rate of strictly worse than
than $r^*$ 
 in the strategic outcome.
By rule R3, 
the strategic outcome cannot be at least as good as the honest outcome. \elaine{is it incomparable?}
\elaine{TODO:}

\end{itemize}

\ignore{
{\it Case 2a:}
In Phase 1, the honest strategy buys (or sells) $\amt_0 \geq 0$ units of $X$
at the initial rate $r_0$. 
In Phase 2,  
the honest strategy buys (or sells) an extra $\amt_1 \geq 0$ units 
at a starting rate of $r_0$.
Further, the strategic outcome gains (or loses)
$v' \leq v_0 + v_1$ units of $X$.

Because the strategy cannot declare an earlier arrival
time than the true arrival time $\aux^*$, in the strategic outcome,
the amount bought (or sold) at $r_0$ must
be $v'_0 \leq \min(v_0, v')$.   
Further, if $v' > v'_0$, the remaining $v' - v'_0$ units
are bought (or sold) at a starting rate  
$r_0$ and an ending rate $r_1 \geq r_0$ (or $r_1 \leq r_0$).
We can view the honest strategy as 
\begin{enumerate} 
\item 
buying (or selling) 
$v'_0$ units at $r_0$, 
\item 
buying (or selling) 
an extra 
$v_0 - v'_0$ units at $r_0$, 
\item 
if $v' > v'_0$, 
buying (or selling) an extra $\min(v' - v'_0, v_1)$ 
units at a starting rate $r_0$, 
\item  
if there are any remaining units to buy (or sell), buy (or sell)
 them at a rate that is at most $r^*$ (or at least $r^*$).
\end{enumerate}
Therefore, the difference from the strategic case (items

at the same
cost as the strategic outcome; and then buying (or selling)
an additional $v_0 + v_1 - v'$ units at a rate at most $r^*$  
(or at least $r^*$).
In this case, the honest outcome is at least as good as the strategic
one by \elaine{FILL: refer to original partial ordering rule}.

If the strategy declares a later arrival time than $\aux^*$,
then at most $v'_0 =  \min(v_0, v')$
units are bought (or sold) at $r_0$, and if $v' - v'_0 > 0$, the remaining units



\item 
{\it Case 2a:}
In Phase 1, the honest strategy buys (or sells) $\amt_0 \geq 0$ units of $X$
at the initial rate $r_0$. 
Then in Phase 2,  
it buys (or sells) an extra $\amt_1 \geq 0$ units 
at a starting rate of $r_0$ and an ending 
rate of exactly $r^* \geq r_0$ (or $r^* \leq r_0$).
In this case, it must be that 
$\amt_0 + \amt_1 \leq \amt^*$. 

Suppose the strategic outcome gains (or loses) 
$\amt'$ units of $X$.

Suppose the strategic outcome has  
a net gain of $\amt'$ units of $X$.
Without loss of generality, we may assume that $\amt' \geq \amt_0 + \amt_1$,
since otherwise, 
the honest outcome would be at least as good as or incomparable
to the strategic outcome.
Since the strategy cannot 
claim an arrival time earlier than $\aux^*$, 
the amount executed at $r_0$ 
cannot exceed $\amt_0$. 
It must be that at least $\amt' - \amt_0 \geq \amt_1$ of them
are executed 
with a starting market price of at least $r_0$
and an ending market price of at least $r^*$; 
and the remaining $\amt' - (\amt_0 +\amt_1)$ units are executed
at a starting market price of $r^*$.
Therefore, 
the honest outcome is at least as good as or incomparable to the strategic one. 
\elaine{what assumption on F?}

\item 
{\it Case 2b:}
The honest strategy buys $\amt_0 \geq 0$ units 
at the initial market exchange rate $r_0$, 
and buys an extra $\amt_1 \geq 0$ units 
at a starting market price of $r_0$ and an ending market
price $r_{\rm end} < r^*$. 
In this case, it must be that $\amt_0 + \amt_1 = \amt^*$.

In this case, we may assume that the strategic outcome
gains exactly $\amt^*$ units of $X$, 
since otherwise, 
the honest outcome would be at least as good as or incomparable
to the strategic outcome.
Since the strategy cannot 
claim an arrival time earlier than $\aux^*$, 
the amount executed at $r_0$ cannot exceed $\amt_0$. 
It must be that at least $\amt_1$ of them
are executed 
with a starting market price of at least $r_0$
and an ending market price of at least $r_{\rm end}$.
Therefore, 
the honest outcome is at least as good as the strategic one.
\item 
{\it Case 2c:}
The honest strategy buys $0 \leq \amt_1 \leq \amt^*$
units 
at a starting market price of $r_1 > r_0$ and an ending 
market price of $r_2 = r^* \geq r_1$.
In this case, we may assume that the strategic outcome
gains $\amt' \geq \amt_1$
units of $X$, since otherwise, the honest
outcome is at least as good as or incomparable to the strategic one.
Since the strategy cannot declare an earlier
arrival time than $\aux^*$, 
the first $\amt_1$ out of $\amt'$ units
will be executed at a starting price of at least $r_1$ and 
an ending price of at least $r_2 = r^*$, and the remaining
$\amt' - \amt_1$ units will be executed
when the starting market price is at least $r^*$.
Therefore,
the honest strategy 
is at least as good as or 
incomparable to the strategic one.

\item 
{\it Case 2d:}
The honest strategy buys $\amt^*$ units 
at a starting market price of $r_1 > r_0$ and an ending 
market price of $r_2 \in [r_1, r^*)$.
In this case, we may assume that the strategic outcome
gains 
exactly $\amt^*$ 
units of $X$, since otherwise, the honest
outcome is at least as good as or incomparable to the strategic one.
Since the strategy cannot declare an earlier
arrival time than $\aux^*$, 
the $\aux^*$ units 
will be executed at a starting price of at least $r_1$ and 
an ending price of at least $r_2 = r^*$
Therefore,
the honest strategy 
is at least as good as the strategic one.

\elaine{what assumption on F?}

\end{itemize}


\elaine{what if one extra is fulfilled at exactly asking, 
do we ever need this case}

\elaine{TODO: may need to augment the order with an extra parameter}

the honest 
$({\sf Sell}(X), \_, r^*, \_)$ order must be executed
at a rate in the range $[r^*, r_0]$, and 
the honest $({\sf Buy}(X), \_, r^*, \_)$
order must be executed at a rate in the range $[r_0, r^*]$.
Therefore, 
if $\delta x - \delta x' \geq 0$, then the extra 
$\delta x - \delta x'$ units gained by 
the honest strategy must be at a marginal price that is upper bounded by $r^*$.
Similarly, 
if $\delta x - \delta x' < 0$, then the extra 
$\delta x' - \delta x$ units sold

\ignore{
Therefore, relative to the strategic outcome, 
the honest outcome paid at most 
$r^* \cdot (\delta x - \delta x')$ for 
the extra $\delta x - \delta x'$ units of $X$.
}

\ignore{
In comparison, the honest strategy 
results in a non-negative gain in $X$.
Therefore, the honest strategy 
results in either the same or an incomparable outcome 
from the strategic play.
}

\myparagraph{Case 2:} The user $u$ submits a single
${\sf Sell}(Y)$ or ${\sf Buy}(X)$ order.
Without loss of generality, we may assume that the order is 
of ${\sf Buy}(X)$ type, 
since if it is
of ${\sf Sell}(Y)$, 
knowing ${\sf Pool}(x_0,y_0)$ and 
${\bf b}_{-u}$, 
we can modify it to an equivalent order 
${\sf Buy}(X)$-type 
order with the same arrival time, with exactly the same effect (\cref{fact:prime_asset}).  
We consider the following cases:
\begin{itemize} 
\item 
{\it Case 2a:}
The honest strategy buys $\amt_0 \geq 0$ units 
at the initial market exchange rate $r_0$, 
and buys an extra $\amt_1 \geq 0$ units 
at a starting market price of $r_0$ and an ending market
price of exactly $r^* \geq r_0$. 
In this case, it must be that 
$\amt_0 + \amt_1 \leq \amt^*$.

Suppose the strategic outcome has  
a net gain of $\amt'$ units of $X$.
Without loss of generality, we may assume that $\amt' \geq \amt_0 + \amt_1$,
since otherwise, 
the honest outcome would be at least as good as or incomparable
to the strategic outcome.
Since the strategy cannot 
claim an arrival time earlier than $\aux^*$, 
the amount executed at $r_0$ 
cannot exceed $\amt_0$. 
It must be that at least $\amt' - \amt_0 \geq \amt_1$ of them
are executed 
with a starting market price of at least $r_0$
and an ending market price of at least $r^*$; 
and the remaining $\amt' - (\amt_0 +\amt_1)$ units are executed
at a starting market price of $r^*$.
Therefore, 
the honest outcome is at least as good as or incomparable to the strategic one. 
\elaine{what assumption on F?}

\item 
{\it Case 2b:}
The honest strategy buys $\amt_0 \geq 0$ units 
at the initial market exchange rate $r_0$, 
and buys an extra $\amt_1 \geq 0$ units 
at a starting market price of $r_0$ and an ending market
price $r_{\rm end} < r^*$. 
In this case, it must be that $\amt_0 + \amt_1 = \amt^*$.

In this case, we may assume that the strategic outcome
gains exactly $\amt^*$ units of $X$, 
since otherwise, 
the honest outcome would be at least as good as or incomparable
to the strategic outcome.
Since the strategy cannot 
claim an arrival time earlier than $\aux^*$, 
the amount executed at $r_0$ cannot exceed $\amt_0$. 
It must be that at least $\amt_1$ of them
are executed 
with a starting market price of at least $r_0$
and an ending market price of at least $r_{\rm end}$.
Therefore, 
the honest outcome is at least as good as the strategic one.
\item 
{\it Case 2c:}
The honest strategy buys $0 \leq \amt_1 \leq \amt^*$
units 
at a starting market price of $r_1 > r_0$ and an ending 
market price of $r_2 = r^* \geq r_1$.
In this case, we may assume that the strategic outcome
gains $\amt' \geq \amt_1$
units of $X$, since otherwise, the honest
outcome is at least as good as or incomparable to the strategic one.
Since the strategy cannot declare an earlier
arrival time than $\aux^*$, 
the first $\amt_1$ out of $\amt'$ units
will be executed at a starting price of at least $r_1$ and 
an ending price of at least $r_2 = r^*$, and the remaining
$\amt' - \amt_1$ units will be executed
when the starting market price is at least $r^*$.
Therefore,
the honest strategy 
is at least as good as or 
incomparable to the strategic one.

\item 
{\it Case 2d:}
The honest strategy buys $\amt^*$ units 
at a starting market price of $r_1 > r_0$ and an ending 
market price of $r_2 \in [r_1, r^*)$.
In this case, we may assume that the strategic outcome
gains 
exactly $\amt^*$ 
units of $X$, since otherwise, the honest
outcome is at least as good as or incomparable to the strategic one.
Since the strategy cannot declare an earlier
arrival time than $\aux^*$, 
the $\aux^*$ units 
will be executed at a starting price of at least $r_1$ and 
an ending price of at least $r_2 = r^*$
Therefore,
the honest strategy 
is at least as good as the strategic one.

\elaine{what assumption on F?}

\end{itemize}
}

\elaine{what if one extra is fulfilled at exactly asking, 
do we ever need this case}

\elaine{TODO: may need to augment the order with an extra parameter}

\end{proof}
\section{Conclusion}
\label{sec:conclusion}

In this paper, we propose new models
for studying  mechanism design for DeFi applications. 
Unlike the prior work of 
Ferreira and Parkes~\cite{credible-ex} and others'~\cite{greedyseq}
which assume
that the mechanism on the blockchain must be 
a first-in-first-out
mechanism, we allow the the mechanism designer
to  specify the mechanism running on the blockchain. 
This allows us 
to circumvent the strong impossibility results of Ferreira and Parkes~\cite{credible-ex}.
Depending on 
assumption on 
the underlying blockchain, we consider two possible strategies spaces.
If the underlying blockchain does not enjoy sequencing fairness,
we assume that the strategic user (or miner)
can post orders after observing honest users' orders, insert
fake orders, 
censor honest users' orders, 
and control the sequencing of the orders in the block.
If the underlying blockchain enjoys sequencing fairness,
we assume that the strategic user
can do all of the above; however, it cannot censor honest users' orders,
nor can it under-report its arrival time.


We design a novel mechanism 
that achieves arbitrage resilience (which was deemed
impossible under 
Ferreira and Parkes~\cite{credible-ex}'s model),
and additionally achieves incentive compatibility 
if the underlying consensus offers sequencing fairness. 

Our paper  
raises many interesting directions for future work. For example,
can we achieve incentive compatibility without
relying on the sequencing fairness assumption?
Can we extend the results to multi-asset swaps?
Can we optimize social welfare and revenue 
under incentive compatibility?
Another interesting direction is 
whether we can achieve 
incentive compatibility
under an all-or-nothing fulfillment model, that is,
any order is either completely fulfilled
or not executed at all.


\bibliographystyle{alpha}
\bibliography{refs,crypto,bitcoin}

 \appendix
\section*{Appendices}
\section{Full Definition: Partial Ordering of Outcomes}
\label{sec:rank}

We first define a most natural partial ordering among a user's outcomes. 
We do not directly define a total ordering
since some outcomes may not 
be directly comparable without extra information.
Intuitively, suppose that a user's type
is $({\sf Buy}(X), \amt, r, \_)$, 
the natural partial ordering
says that 1) up to receiving at most $\amt$ units of $X$, 
every extra unit of $X$ received at a marginal price
better than $r$ is desirable; 
2) every excessive unit (i.e., more than the $\amt$ amount)
of $X$ received at a marginal price of
more than $r$ is undesirable;
 3) every 
unit of $X$ 
short-sold   
at a price less than $r$ is undesirable;
and 4) if the net gain in $X$
is the same in two outcomes, then
the outcome where a lower price is paid 
is more desirable.

\myparagraph{Example.}
For example, imagine that a user wants to buy
$7$ units of $X$ at a desired maximum price of $20$ --- the case for sell is symmetric.
\begin{enumerate} 
\item 
Consider the following two outcomes: 
1) the user gains $5$ units of $X$ at an average price of $10$;
and 2) the user gains $6$ units of $X$ at an average price of 
$11$. 
In both cases, the order is not completely fulfilled.  
The second outcome is better for the user since it can
be viewed as first achieving the first outcome  
and then paying $16$ for an extra unit, which is less
than the ask $20$.
\item 
Now, consider the 
following two outcomes: 
1) the $7$ units are completely fulfilled at an average price of $10$;
and 2) the user obtains $8$ units of $X$ at an average price of $12$.
The second outcome can occur if the user is strategic
and does not honestly report its desired amount or price. 
The first outcome
is better for the user, since the second outcome  
can be viewed as  
achieving the first outcome 
and then purchasing an extra unit  
at a marginal price of $12 \cdot 8 - 7 \cdot 10 = 26$, which is higher
than the ask $20$. 
\item 
Finally, consider the following two outcomes 
1) the $7$ units are completely fulfilled at an average price of $10$;
and 2) the user obtains $8$ units of $X$ at an average price of $11$.
The second outcome can be viewed as 
achieving the first outcome and additionally 
purchasing an extra unit  
at a price of $11 \cdot 8 - 7 \cdot 10 = 18$. 
In this case, the two outcomes
are incomparable unless
we can quantify the utility  
the user gains from the extra unit 
that it did not plan for initially.  
\end{enumerate}

\elaine{TODO: change the name weak and strong, since strong does not imply
weak}

We now formally define this natural partial ordering.

\myparagraph{A natural partial ordering.}
We use a tuple $T = (t, \amt, r, \aux)$
to represent the type of a user, where $(t, \amt, r)$
denotes the user's true demand and valuation, and $\aux$
denotes any auxiliary information.
For example, suppose $T = ({\sf Buy}(X), \amt, r, \_)$, it means
the user wants to obtain $\amt$ units of $X$ at an exchange
rate of $r$ or better.

To define the natural partial ordering, we will focus on 
the case of ${\sf Buy}(X)$/${\sf Sell}(X)$ orders; and the case for 
${\sf Buy}(Y)$/${\sf Sell}(Y)$ 
orders are symmetric, except that we replace
the usage of the rate $r$ with $1/r$.
\ke{it does not just replace $r$, but also switch the prime asset and the other?}
Based on a user's type $T = (t, \amt, r, \_)$, 
we can represent the user's true demand as
$(\delta x, r)$ where 
\[\delta x = 
\begin{cases}
 v & \text{ if $t = {\sf Buy}(X)$} \\
 -v & \text{ if $t = {\sf Sell}(X)$} 
\end{cases}
\]

We can use a pair $(\delta x_0, \delta y_0)$ to denote the outcome, 
meaning that the user has a net gain of $\delta x_0$ 
in $X$, and it has a net gain of $\delta y_0$ 
in $Y$. If $\delta x_0$ or $\delta y_0$ is negative, it means that the user
has a net loss 
in $X$ or $Y$.
\ke{I added here}
Next, we give the rules for the partial ordering $\preceq_T$.
When the context is clear, we may omit the subscript $T$.
Given two outcomes 
$(\delta x_0, \delta y_0)$ and $(\delta x_1, \delta y_1)$
and the true demand 
$(\delta x, r)$, 
we define the following partial ordering 
between
$(\delta x_0, \delta x_0)$ and $(\delta x_1, \delta y_1)$:
\begin{enumerate} 
\item 
If $\delta x_0 \geq \delta x_1$ and $\delta y_0 \geq \delta y_1$, then  
$(\delta x_0, \delta y_0) \succeq_T (\delta x_1, \delta y_1)$.
\item 
If 
$\delta x_0 \cdot  (\delta x_0 - \delta x) \leq 0$, 
$\delta x_1 \cdot  (\delta x_1 - \delta x) \leq 0$, 
and  
$r \delta x_0 + \delta x_1 \geq 0$, then we say that 
$(\delta x_0, \delta y_0) \succeq_T (\delta x_1, \delta y_1)$.

\item 
If $(\delta x_0 -  \delta x) (\delta x_1 -  \delta x) \geq 0$, 
$|\delta x_0 - \delta x| \leq |\delta x_1 - \delta x|$, 
and 
$\delta y_1  - \delta y_0 \leq r \cdot (\delta x_0 - \delta x_1)$, 
then $(\delta x_0, \delta y_0) \succeq_T (\delta x_1, \delta y_1)$.
\item 
Finally, the transitivity rule holds, that is, 
if there exists an intermediate outcome 
$(\delta x', \delta y')$ such that 
$(\delta x_1, \delta x_1) \succeq_T (\delta x', \delta y')$ and 
$(\delta x', \delta y') \succeq_T
(\delta x_0, \delta y_0)$, then it holds that 
$(\delta x_1, \delta y_1) \succeq_T (\delta x_0, \delta y_0)$. 
\end{enumerate}

In particular, the first rule 
roughly says a user always prefers 
an outcome in which it gains 
at least as much in either asset.
The second rule can be interpreted as follows:
the pre-conditions
$\delta x_0 \cdot  (\delta x_0 - \delta x) \leq 0$
and
$\delta x_1 \cdot  (\delta x_1 - \delta x) \leq 0$
mean that we have satisfied part of the demand 
(i.e., between $0$ and the entirety of the demand). 
The rule says that if this is the case, 
then the partial ordering is decided
by the natural utility function $r\delta x_0 + \delta y_0$.
The third rule 
can be interpreted as follows:
the part $(\delta x_0 -  \delta x) (\delta x_1 -  \delta x) \geq 0$
means that 
$\delta x_0$ and $\delta x_1$ are on the same
side of $\delta x$, the part
$|\delta x_0 - \delta x| \leq |\delta x_1 - \delta x|$
means that $\delta x_0$ is closer to the goal $\delta x$ than
$\delta x_1$.
The entire third rule says that 
if 
relative to $(\delta x_1, \delta y_1)$, 
the outcome
$(\delta x_0, \delta y_0)$
makes progress towards satisfying the goal 
while enjoying a {\it marginal} price of $r$ or better (where 
better means sell high or buy low), then 
$(\delta x_0, \delta y_0) \succeq_T (\delta x_1, \delta y_1)$.
Conversely, if relative to 
$(\delta x_0, \delta y_0)$, 
the outcome
$(\delta x_1, \delta y_1)$ 
goes in the opposite direction
of satisfying the goal, 
while suffering from a {\it marginal} price of $r$ or worse, then  
$
(\delta x_1, \delta y_1) \preceq_T 
(\delta x_0, \delta y_0) 
$.
The last rule is the standard transitivity rule for 
any partial ordering relation.

We can mechanically verify that all of the above rules are internally consistent.

\elaine{TODO: i changed the partial ordering, change everything else that depends on it.}

\section{No Short-Selling Variant}

In this section, we consider a variant of our scheme 
with the following modifications:
\begin{enumerate} 
\item 
The mechanism checks the balance 
of a user (in either asset) and ensures that the user's account balance
does not go negative. 
In other words, no user is allowed to short-sell;
\item Suppose the user's belief
of the rate $r'$ is greater than the market rate $r$, 
then the user's honest strategy would be to sell as much $Y$ 
as possible 
as long as the market rate stays below $r'$; and vice versa.
\end{enumerate}

For this no-short-selling variant, it is natural
to define a total 
ordering among outcomes. 
Specifically, for a user whose belief of the rate
is $r'$, given a position of $(x, y)$, i.e., when it holds $x$ 
units of $X$ and $y$ units of $Y$,
then the value of  
the position can be calculated as
$r' x + y$.
It is easy to verify that this total ordering
is a refinement of the partial ordering
defined in \Cref{sec:rank}.
We shall prove that incentive compatibility 
holds with respect to this natural total ordering.
Specifically, since no outcomes are incomparable
under a total ordering, incentive compatibility 
(\Cref{defn:ic}) now simply means
that {\it the honest strategy 
maximizes a user's utility}.

\subsection{Definition}
We first make a slight modification to the syntax 
of the swap mechanism. 

\myparagraph{Partial fulfillment swap mechanism without short-selling.}
Recall that every user has a position 
${\sf Pos}(x, y)$ that denotes
its balance in $X$ and $Y$, and we require that $x \geq 0, y \geq 0$.

Partial fulfillment swap mechanism without short-selling
is defined similarly as the partial fulfillment mechanism 
of \Cref{sec:defn}, 
except with the following modifications.
We now additionally assume that the mechanism's allocation rule
has an extra input which is 
the current positions of all users. 
Further, each user has a unique identifier 
that is included in the $\aux$ field of the order, such that the mechanism
can see who submitted the order. 
Note that there can be multiple orders coming from the same user.
The mechanism is required to produce an outcome 
such that no user's ending position is negative
in either $X$ or $Y$.

We require 
incentive compatibility to hold
no matter what the users' initial positions are. 

\begin{figure*}[t]
\begin{mdframed}
    \begin{center}
    {\bf Our swap mechanism: variant with no short-selling} 
    \end{center}

\vspace{2pt}
\noindent
    \textbf{Input:} 
A current pool state ${\sf Pool}(x_0, y_0)$, 
and a vector of orders 
${\bf b}$, and all users' initial positions. 

\vspace{5pt}
\noindent\textbf{Mechanism:}
\begin{enumerate} 
\item 
Let $r_0 := r(x_0, y_0)$
be the initial rate.
Ignore all ${\sf Buy}(X)$/${\sf Sell}(Y)$  
orders whose specified rate 
$r < r_0$, and ignore all 
${\sf Buy}(Y)$/${\sf Sell}(X)$ orders 
whose specified rate $r > r_0$.
Let 
${\bf b}'$ 
be the remaining orders.
\item 
Attempt to safe-execute all orders in ${\bf b'}$ 
\elaine{TODO: define safe-execute}
at rate $r_0$ (without actually executing them): 
if the sum of the net gain in $X$ of all users is non-negative,  
we call it the ${\sf Buy}(X)/{\sf Sell}(Y)$ case;
otherwise
we call it the ${\sf Buy}(Y)/{\sf Sell}(X)$ case.
\ignore{
Let $\sigma = \sum_{(t, \amt, r) \in {\bf b}'} \beta(t, \amt, r)$
where $\beta(t, \amt, r) = \begin{cases}
\amt & \text{ if } t = {\sf Buy}(X)\\
-\amt & \text{ if } t = {\sf Sell}(X)\\
-\amt/r_0 & \text{ if } t = {\sf Buy}(Y)\\
\amt/r_0 & \text{ if } t = {\sf Sell}(Y)
\end{cases}$

We call $\sigma \geq 0$ the ${\sf Buy}(X)$/${\sf Sell}(Y)$-dominant case,
and $\sigma < 0$ the  
${\sf Buy}(Y)$/${\sf Sell}(X)$-dominant case.
}
\item 
The ${\sf Buy}(X)$/${\sf Sell}(Y)$-dominant case: 
\begin{enumerate} 
\item 
\label{stp:sort-noshort}
Sort ${\bf b}'$ such that all the ${\sf Sell}(X)$
and ${\sf Buy}(Y)$
orders 
appear in front of the 
${\sf Sell}(Y)$
and ${\sf Buy}(X)$ orders. 
Write the resulting list 
of orders as $\{(t_i, \amt_i, r_i, ({\id}_i, \_))\}_{i \in [n']}$.

\item 
Attempt to safe-execute 
orders in $\{(t_i, \amt_i, r_i, ({\id}_i, \_))\}_{i \in [n']}$
sequentially (without actually 
executing them), 
let $\beta_i$ denote the net gain in $X$
contributed by the $i$-th order.
\item 
Henceforth we assume that there exists an index 
$j \in [n']$ such that 
$\sum_{i = 1}^j \beta_i = 0$.
If not, 
we can find 
the smallest index 
$j \in [n']$ 
such that $\sum_{i = 1}^j \beta_i > 0$, 
and split the $j$-th order into two orders 
$(t_j, \amt_{j, 0}, r_j, ({\id}_i, \_))$
and $(t_j, \amt_{j, 1} , r_j, ({\id}_i, \_))$, 
resulting in a new list with $n'+1$
orders, such that  $\amt_{j, 0} + \amt_{j, 1} = \amt_j$, 
and moreover, 
index $j$ of the new list satisfies 
this condition.

\item 
Phase 1: Safe-execute the first $j$ orders at rate $r_0$.
\item 
Phase 2: Safe-execute each remaining order $i \geq j + 1$ in sequence.
\end{enumerate}

\item 
The ${\sf Buy}(Y)$/${\sf Sell}(X)$-dominant case is symmetric.  
\ignore{
and we describe it 
below for completeness.
\begin{enumerate} 
\item 
Sort ${\bf b}'$ such that all the 
${\sf Sell}(X)$
and ${\sf Buy}(Y)$
orders 
appear {\color{blue}after} the 
${\sf Sell}(Y)$
and ${\sf Buy}(X)$ orders. 
Write the resulting list 
of orders as $\{(t_i, \amt_i, r_i)\}_{i \in [n']}$.
\item 
Find the smallest index $j \in [n']$ 
such that ${\color{blue}-}
\sum_{i = 1}^j 
\beta(t_i, \amt_i, r_i) \geq 0$.
\item 
Fully execute the first $j-1$ orders 
and the first $\amt_j 
{\color{blue}+} \sum_{i = 1}^j \beta(t_i, \amt_i, r_i)$
units of the $j$-th order   
at an exchange rate of $r_0$.
\item 
Replace the $j$-th order 
$(t_j, \amt_j, r_j)$
with the unfulfilled portion, that is, 
$(t_j, \amt_j, r_j) \leftarrow 
(t_j, {\color{blue}- } \sum_{i = 1}^j \beta(t_i, \amt_i, r_i), r_j)$.
\item 
For each $i \geq j + 1$ in sequence, 
fulfill as much of the $i$-th remaining order as possible,
that is, 
pick the largest $\amt \leq \amt_i$ such that  
subject to the constant-function market maker $\potential$, 
the new market exchange rate \elaine{TODO: define}
$r {\color{blue} \geq } r_i$ 
if $\amt$ units are to be executed;  
execute $\amt$ units of the $i$-th order.
\end{enumerate}
}
\end{enumerate}
\end{mdframed}
\caption{Our swap mechanism: variant without short-selling}
\label{fig:mech-noshort}
\end{figure*}

\subsection{A Swap Mechanism Without Short-Selling}
We now describe a variant of our earlier mechanism but now providing
an additional guarantee of no short-selling.  
Basically, the new mechanism is almost the same as the old one,
except that during the execution, 
the mechanism always ensures that 
after (partially) executing every  
order, the corresponding user's  
position
is non-negative in both $X$ and $Y$.

Henceforth, we can use the notation 
$(t, \amt, r, ({\id}, \aux))$
to denote an order where $t$, $\amt$, and $r$
are the same as before, 
${\id}$ denotes the user's pseudonym, and 
$\aux$ denotes any arbitrary auxiliary information.
To precisely capture no short-selling, we define 
the notion of 
{\it safe-execute} below:

\elaine{TODO: associate line numbers with phases in narrative}

\begin{enumerate} 
\item {\it Safe-execute} in Phase 1: 
Do the following based on the type of the order:
\begin{itemize} 
\item $({\sf Buy}(X), \amt, r, (\id, \_))$: 
Pick the maximum possible $\amt' \leq \amt$ such that 
buying $\amt'$ units of $X$ will not cause
user $u$'s balance in $Y$ to go negative;
and let the user $u$ buy $\amt'$ units of $X$
at a fixed rate $r_0$. 
\item $({\sf Sell}(X), \amt, r, (\id, \_))$: 
Pick the maximum $\amt' \leq \amt$ such that 
selling $\amt'$ units of $X$ will not cause
user $u$'s balance in $X$ to go negative;
and let the user $u$ sell $\amt'$ units of $X$ at a fixed rate $r_0$. 
\item 
$({\sf Buy}(Y), \amt, r, (\id, \_))$ or 
$({\sf Sell}(Y), \amt, r, (\id, \_))$: 
symmetric except that we now use $1/r$ in place of $r$.
\end{itemize}
\item {\it Safe-execute} in Phase 2: 
Do the following based on the type of the order, 
where $r_{\rm cur}$ denotes the market rate before
the order is executed:
\begin{itemize} 
\item $({\sf Buy}(X), \amt, r, (\id, \_))$: 
Pick the maximum $\amt' \leq \amt$ such that 
if the user $\id$ bought
$\amt'$ amount of $X$ at a starting rate of $r_{\rm cur}$, 
then 1)  the ending rate $r_{\rm end} \leq r$; and 2) 
user $u$'s balance in $Y$ will not go negative.
Let the user $\id$ buy $\amt'$ amount of 
$X$ at the starting market rate $r_{\rm cur}$.
\item $({\sf Sell}(X), \amt, r, (\id, \_))$: 
Pick the maximum $\amt' \leq \amt$ such that 
if the user $\id$ sold 
$\amt'$ amount of $X$ at a starting rate of $r_{\rm cur}$, 
then 1) the ending rate $r_{\rm end} \geq r$; and 2) 
user $u$'s balance in $X$ will not go negative.
\item 
$({\sf Buy}(Y), \amt, r, (\id, \_))$ or 
$({\sf Sell}(Y), \amt, r, (\id, \_))$: 
symmetric except that we now use $1/r$ in place of $r$.
\end{itemize}
\end{enumerate}

\begin{theorem}[Arbitrage resilience]
\label{thm:strong-arbitrage-util-model}
The swap mechanism in \Cref{fig:mech-noshort} satisfies arbitrage resilience.
In particular, this holds no matter how ties are broken
during the sorting step.
\end{theorem}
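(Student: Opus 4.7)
The plan is to mirror the proof of \Cref{thm:strong-arbitrage} almost verbatim. The only semantic difference between the no-short-selling mechanism (\Cref{fig:mech-noshort}) and the original mechanism (\Cref{fig:mech}) is that ``safe-execute'' may cap the amount of each order so that no user's balance in $X$ or $Y$ ever goes negative; crucially, this cap alters only the \emph{amount} traded, not the \emph{rate} at which the executed portion is traded. Hence the same two structural rate invariants I would like to prove for the original mechanism still go through.

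First I would establish two rate invariants for the Buy$(X)$/Sell$(Y)$-dominant branch (the Buy$(Y)$/Sell$(X)$-dominant branch being symmetric): (i) every (partial) execution in Phase~1 happens at the initial market rate $r_0$, by construction; and (ii) Phase~2 begins at the unchanged pool state ${\sf Pool}(x_0, y_0)$, so by increasing marginal cost (\Cref{fact:inc-marginal-cost}) every (partially) executed Buy$(X)$/Sell$(Y)$ order in Phase~2 trades at an average rate at least $r_0$. For (ii) I would verify that the splitting step, which uses the sequential $\beta_i$ from the trial safe-execute in step~(b), picks $j$ so that the first $j$ actual safe-executions at rate $r_0$ net to zero in $X$ across users, and hence also net to zero in $Y$ since the rate is $r_0$; this is exactly what keeps the pool invariant at the end of Phase~1.

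With these invariants, the remainder is a direct calculation. For any subset $S$ of (partially) executed orders with joint outcome $(\delta x, \delta y)$, let $A$ be the $X$-units flowing out of $S$-users (all in Phase~1 at rate $r_0$), let $B_1$ be the $X$-units flowing in during Phase~1 (at $r_0$), and let $B_2 = \sum_k u_k$ be the $X$-units flowing in during Phase~2 with per-batch average rates $r_k \geq r_0$. Then
\[
\delta x \;=\; (B_1 + B_2) - A, \qquad \delta y \;=\; r_0(A - B_1) - \sum_k r_k u_k \;\leq\; r_0(A - B_1 - B_2) \;=\; -r_0 \cdot \delta x.
\]
Since $\potential$ is increasing we have $r_0 > 0$, so $\delta x \geq 0$ forces $\delta y \leq 0$ (strictly when $\delta x > 0$); and if $\delta x = 0$ then $\delta y = -\sum_k u_k(r_k - r_0) \leq 0$. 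In either subcase $(\delta x, \delta y)$ cannot be non-negative with at least one coordinate strictly positive, which is precisely arbitrage resilience.

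The main obstacle, and the only place that needs care, is the verification of invariant~(ii), because the cap imposed by safe-execute is history-dependent: one order's execution may exhaust a user's balance and thereby reduce what a later order of that user can do. One therefore cannot reuse the $\beta$ function from \Cref{fig:mech} directly; one must verify that the ``re-attempted'' sequential safe-execute in step~(b) correctly produces the $\beta_i$ used in step~(c), and that the splitting of the $j$-th order is well defined under the safe-execute semantics. Once (ii) is in place, the tie-breaking in Step~\ref{stp:sort-noshort} is irrelevant because both rate invariants are insensitive to it, which also justifies the ``no matter how ties are broken'' clause in the theorem.
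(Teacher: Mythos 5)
Your proposal is correct and follows essentially the same route as the paper, which simply observes (as in \Cref{thm:strong-arbitrage}) that Phase~1 executes at rate $r_0$ with no net change to the pool and Phase~2 executes only ${\sf Buy}(X)/{\sf Sell}(Y)$ orders at rates at least $r_0$ by increasing marginal cost (\Cref{fact:inc-marginal-cost}), so no subset of orders can gain in one asset without losing in the other. Your explicit $\delta y \leq -r_0\,\delta x$ calculation and the remark that safe-execution caps only amounts, not rates, just make the paper's one-paragraph argument more detailed.
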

\begin{proof}
The proof is the same as that of \Cref{thm:strong-arbitrage}.
We explain it for the 
${\sf Buy}(X)$/${\sf Sell}(Y)$-dominant case, 
since the 
${\sf Buy}(Y)$/${\sf Sell}(X)$-dominant case
is symmetric.
The mechanism executes in the following way: 
In phase 1, 
it (partially) executes a set of orders all at the 
initial rate $r_0$, such that there is no change to the initial pool state 
${\sf Pool}(x_0, y_0)$. In phase 2, it executes only 
${\sf Buy}(X)/{\sf Sell}(Y)$ orders, which enjoy rate that is $r_0$ or greater, due to increasing marginal cost (\Cref{fact:inc-marginal-cost}).
Therefore, for any subset of orders,
it cannot be the case that 
there is a net gain in one asset without any loss in the other.
\end{proof}

\ignore{
\begin{theorem}[Fair treatment]
If the sorting step
in \Cref{fig:mech-noshort}
breaks ties
by random ordering, then 
the resulting mechanism satisfies fair treatment.
\end{theorem}
\begin{proof}
By the same proof as of \Cref{thm:fairness}.
\end{proof}
}

\myparagraph{Refined swap mechanism with no short-selling in the weak fair-sequencing model.}
In the weak fair-sequencing model, we consider
the mechanism described in \Cref{fig:mech}
with the following refinements:
\begin{itemize} 
\item 
The sorting step in Line~\ref{stp:sort-noshort} 
will break ties 
using arrival order. 
\item 
A user's honest strategy is defined as follows: 
$HS(t, \amt, r, \aux)$
simply outputs 
a single order 
$(t, \amt, r, \aux)$, where $t$ is either ${\sf Sell}(X)$ or ${\sf Sell}(Y)$, and $\amt$ is the amount of $X$ or $Y$ held by the user, respectively.
\end{itemize}

\begin{theorem}
Suppose $\potential$ is concave, increasing, and differentiable.
In the weak fair-sequencing model, 
the above refined swap mechanism in \Cref{fig:mech-noshort} 
satisfies incentive compatibility 
w.r.t. the total ordering.
\label{thm:strongic}
\end{theorem}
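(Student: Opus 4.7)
The plan is to follow the structural template of \Cref{thm:weakic}: first reduce arbitrary strategic order vectors to a single admissible order submitted at the user's true arrival time $\aux^*$, and then show that the honest order weakly dominates any such single-order deviation. The crucial conceptual change is that the total ordering corresponds to the linear utility $U(x,y) = r^* x + y$, which converts every comparison into an integral of marginal profitability $(r^* - p)\,d\xi$ over infinitesimal trades, eliminating the "incomparable" cases that appear under the partial ordering.

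For the reduction, I would re-establish analogs of \Cref{fact:prime_asset}, \Cref{lem:coalesce}, and \Cref{lem:earlygood} for the mechanism in \Cref{fig:mech-noshort}. The proofs carry over almost verbatim, since the only additional restriction is that the safe-execute step clips each order to the user's available balance, and clipping only weakens the adversary. In particular: (i) a ${\sf Sell}(Y)$/${\sf Buy}(Y)$ order can be replaced by an equivalent ${\sf Buy}(X)$/${\sf Sell}(X)$ order with the same trace; (ii) a pair of opposite-direction orders from the same user can be coalesced by cancelling the cross-matched portion, strictly saving the round-trip cost incurred when the market price moves between the two legs; and (iii) delaying arrival is never beneficial, because later orders in Phase~2 face weakly worse marginal prices by \Cref{fact:inc-marginal-cost}, while later orders in Phase~1 lose matches at the favorable rate~$r_0$. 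No-short-selling is automatically preserved by the coalesced order, since its net balance change equals the net balance change of the original multi-order strategy, which was already safe.

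Having reduced to an admissible single-order strategic bid $b'_u$ with arrival time $\aux^*$, I split by whether $r^* > r_0$, $r^* < r_0$, or $r^* = r_0$. Consider the leading case $r^* > r_0$; the honest strategy submits $({\sf Sell}(Y), y_0, r^*, \aux^*)$, a Buy(X)-direction order with budget equal to the user's entire $Y$-holdings. Because the sort in \cref{stp:sort-noshort} is stable in arrival time and the strategic order also arrives at $\aux^*$, the honest order and the strategic order occupy interchangeable positions relative to every other user's order. I would then write the utility change as $U_{\mathrm{honest}} - U_0 = \int_0^{\delta x}(r^* - p(\xi))\,d\xi$, where $p(\xi)$ is the marginal price faced by the honest user when buying the $\xi$-th infinitesimal unit (at rate $r_0$ inside Phase~1, and monotonically rising by increasing marginal cost inside Phase~2). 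The honest order buys exactly those units for which $p(\xi) \le r^*$, subject to the budget $y_0$, so every term in the integrand is non-negative. Any strategic single-order deviation can only (a) forgo units with $p < r^*$ (a strict loss from the honest optimum), (b) attempt to buy units with marginal $p > r^*$ (impossible due to the declared limit rate, or at best neutral if the declared limit equals $r^*$), or (c) point in the opposite Sell(X) direction, which by \Cref{fact:no_free_lunch} and $r^* > r_0$ produces strictly negative utility change on any executed volume. The symmetric case $r^* < r_0$ swaps the roles of $X$ and $Y$, and the case $r^* = r_0$ is trivial.

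The main obstacle will be the bookkeeping between Phase~1 and Phase~2 allocation: whether a given infinitesimal honest unit is executed in Phase~1 at rate $r_0$ or pushed to Phase~2 at some rate $p \in [r_0, r^*]$ depends on both the total imbalance $\sigma$ and on which other-user orders precede the honest position. I expect to handle this by adapting the decomposition used in the proof of \Cref{claim:one}: for any strategic alternative, compare its execution trace to the honest trace by pairing up units in arrival-time order and invoking \Cref{fact:inc-marginal-cost} to argue that the honest trajectory's marginal-price curve lies pointwise below the strategic trajectory's on the prefix where they overlap. The no-short-selling constraint is compatible with this argument because safe-execute only clips the strategic order further, and the clipped tail contributes zero to the utility differential.
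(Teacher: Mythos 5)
Your proposal follows essentially the same route as the paper's proof: the same reduction (prime-asset conversion, coalescing the two legs, and never delaying past the true arrival time, i.e.\ the analog of \Cref{lem:coalesce-noshort}) to an admissible single order at arrival time $\aux^*$, followed by a comparison under the linear utility $r^*x+y$ using increasing marginal cost and no-free-lunch, where your marginal-price integral is just a continuous rephrasing of the paper's discrete case analysis (opposite-direction orders lose because they trade on the wrong side of $r_0$; same-direction orders are compared unit-by-unit against $r^*$ starting from the same market rate thanks to the stable, arrival-time tie-breaking). One small slip worth fixing: buying units at marginal price above $r^*$ is not ``impossible'' (a deviator may declare a limit rate higher than $r^*$), but since each such unit contributes non-positively to the utility difference, the conclusion is unaffected.
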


\subsection{Proof of Incentive Compatibility}
Observe that for the mechanism in \Cref{fig:mech-noshort}, the facts of increasing marginal cost (\cref{fact:inc-marginal-cost}), no free lunch (\cref{fact:no_free_lunch}), and \cref{fact:prime_asset} \ke{give it a name? prime asset switch?} still hold.
Since the total ordering is a refinement of the partial ordering, 
by the same proof as in \cref{lem:coalesce} and \cref{lem:earlygood}, we have the following result:
\begin{lemma}
\label{lem:coalesce-noshort}
Suppose $\potential$ is concave, increasing, and differentiable.
Given any initial state ${\sf Pool}(x_0,y_0)$ and order vector ${\bf b}_{-u}$,
for any strategic order vector ${\bf b}_u$ of user $u$, there exists
a single order $b'_u$ 
such that 
1) $b'_u$ results in an outcome at least as good as 
${\bf b}_u$ w.r.t. the total ordering; 
2) the arrival time used in $b'_u$  
is the true arrival time $\alpha^*$ of user $u$;
and 3) 
either $b'_u = (\_, 0, \_, \_)$ 
or 
$b'_u$ would be completely 
safe-executed
under ${\sf Pool}(x_0, y_0)$ and ${\bf b}_{-u}$.
\end{lemma}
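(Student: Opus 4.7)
The plan is to mirror the three-stage proof of \Cref{lem:coalesce}, namely \Cref{claim:coalesce} (coalescing same-direction orders), \Cref{claim:one} (cancelling the two opposing orders down to one), and an analog of \Cref{lem:earlygood} (normalizing the arrival time to $\alpha^*$), replacing every invocation of ``execute'' by ``safe-execute''. Since the total ordering $\preceq_T$ on outcomes is a refinement of the partial ordering $\preceq_P$, every $\succeq_T$ relation we need is implied by the corresponding $\succeq_P$ relation, so the original qualitative arguments carry over as long as they still hold under safe-execution. Concretely, I first apply \Cref{fact:prime_asset} (which clearly still holds for safe-execution) to assume ${\bf b}_u$ contains only ${\sf Buy}(X)$/${\sf Sell}(X)$-type orders, then handle the ${\sf Buy}(X)$/${\sf Sell}(Y)$-dominant case (the other is symmetric).

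Next I coalesce all ${\sf Sell}(X)$ orders in ${\bf b}_u$ into a single order $({\sf Sell}(X), \amt_s, r_s, \alpha^*)$ where $\amt_s$ equals the total units actually safe-executed by those ${\sf Sell}(X)$ orders and $r_s$ is the minimum asking rate among them, and symmetrically coalesce the ${\sf Buy}(X)$ orders into $({\sf Buy}(X), \amt_b, r_b, \alpha^*)$ with $\amt_b$ equal to the total units actually safe-executed and $r_b$ the maximum asking rate. By increasing marginal cost (\Cref{fact:inc-marginal-cost}), this pair produces an outcome at least as good as ${\bf b}_u$ under $\preceq_P$. I then apply the analog of \Cref{claim:one} to reduce the two orders to a single order, splitting into the subcases $\amt_s\ge \amt_b$ and $\amt_s<\amt_b$ and reusing the Phase~1/Phase~2a/Phase~2b/Phase~2c payment accounting from the original proof. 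Finally, collapsing the declared arrival time to $\alpha^*$ via the analog of \Cref{lem:earlygood} completes the reduction.

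The main obstacle is verifying requirement~(3) — that the surviving single order be completely safe-executed — because partial fulfillment can now be triggered not only by the marginal rate crossing the asking rate, but also by the user's running balance of $X$ or $Y$ hitting zero. The key invariant I need to establish is that when the amounts of the coalesced orders are defined as the totals actually safe-executed under the strategic play, the balance available to the user at the moment $b'_u$ is presented to the mechanism is no tighter than the corresponding aggregate balance usage under ${\bf b}_u$. For instance, in the $\amt_s<\amt_b$ subcase the reduction to a net ${\sf Buy}(X)$ of size $\amt_b-\amt_s$ decreases the $Y$ outlay relative to the original two-order play, so $Y$-budget feasibility is preserved; a symmetric argument works for ${\sf Sell}(X)$ when $\amt_s\ge \amt_b$. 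These balance-preservation arguments have to be made for every subcase of \Cref{claim:one}, and this is where the bulk of the work beyond the original proof lies.

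Because the total ordering agrees with the partial ordering on every $\preceq_P$-comparable pair and simply totalizes the incomparable pairs through the position value $r'x+y$, the $\succeq_P$ inequalities established above immediately lift to $\succeq_T$, yielding the conclusion. The only nontrivial novelty over \Cref{lem:coalesce} is thus the balance-feasibility bookkeeping described in the previous paragraph; the structural case analysis and the monotonicity facts (\Cref{fact:inc-marginal-cost}, \Cref{fact:no_free_lunch}, \Cref{fact:prime_asset}) all carry over verbatim.
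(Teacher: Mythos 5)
Your proposal is correct and follows essentially the same route as the paper: the paper's own proof simply invokes the arguments of \Cref{lem:coalesce} and \Cref{lem:earlygood} verbatim under safe-execution and observes that the replacement preserves $\delta x$ while weakly improving $\delta y$, so the conclusion lifts to the total ordering exactly as you argue via refinement. Your additional bookkeeping showing that the coalesced single order remains balance-feasible (hence completely safe-executable) is the same point the paper asserts implicitly with "replaced with a single order $b'_u$ that can be fully secure-executed," just made more explicit.
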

\begin{proof}
    By the same proof of \cref{lem:coalesce} and \cref{lem:earlygood}. 
    Suppose the joint outcome of ${\bf b}_u$ is $(\delta x, \delta y_0)$.
    Throughout the proof,
    the strategic order vector ${\bf b}_u$ is replaced with a single order $b'_u$ that can be fully secure-executed and results in an outcome $(\delta x, \delta y_1)$.
    Since $\delta y_1\geq \delta y_0$ by the same proof, 
    the outcome of $b'_u$ is at least as good as ${\bf b}_u$ w.r.t. the total ordering.
\end{proof}
Thus,
it suffices to consider strategies that submit a single order,
declare the true arrival time $\aux^*$, 
and moreover, either the order has a $0$ amount or
it will be completely executed under ${\sf Pool}(x_0, y_0)$ and ${\bf b}_{-u}$.
Henceforth, we call such strategies 
as {\it admissible, single-order} strategies.
The proof of \Cref{thm:strongic} can be completed by showing the following lemma.

\begin{lemma}
\label{lemma:strong_single_strategic}
Suppose $\potential$ is concave, increasing, and differentiable.
For any admissible and single-order strategy~$S$, 
the honest strategy results in an outcome that is at least
as good as strategy~$S$.  
\end{lemma}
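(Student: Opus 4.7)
I will compare, for a user with type $(t^*, \amt^*, r^*, \aux^*)$ and initial pool state ${\sf Pool}(x_0, y_0)$, the linear utility $U(\delta x, \delta y) = r^* \delta x + \delta y$ induced by the belief rate $r^*$, achieved under honest play versus under an arbitrary admissible single-order strategy. By \Cref{fact:prime_asset} applied in the no-short-selling setting, the strategic order may be assumed to be either ${\sf Buy}(X)$- or ${\sf Sell}(X)$-type, and admissibility gives that it is either of the form $(\_,0,\_,\_)$ or completely safe-executed at the true arrival time $\aux^*$. By symmetry, I will analyse only the case in which the honest order is $({\sf Sell}(Y), y, r^*, \aux^*)$, which is the interesting case $r^* \geq r_0$; the case $t^* = {\sf Sell}(X)$ with $r^* \leq r_0$ is symmetric. \Cref{fact:prime_asset} then lets me view the honest order as a ${\sf Buy}(X)$-type order with $Y$-budget exactly $y$ and rate cap $r^*$.

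\textbf{Marginal-rate accounting.} The key observation is that any ${\sf Buy}(X)$-like execution for user~$u$ can be decomposed into an ordered sequence of marginal units bought at rates that equal $r_0$ while in Phase~1 and are non-decreasing in Phase~2 by \Cref{fact:inc-marginal-cost}. Each such unit contributes $r^* - r_i$ to $U$, which is non-negative iff $r_i \leq r^*$. Symmetrically, ${\sf Sell}(X)$-like units contribute $r_i - r^*$ at rates that are $r_0$ in Phase~1 and non-increasing in Phase~2. Since the strategic and honest orders share the arrival time $\aux^*$ and the sort is stable, as long as the batch's dominance is unchanged I can couple the two executions so that the sequence of marginal rates available to user~$u$'s trades is the same in both, up to the point where one order terminates. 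This reduces the utility comparison to a pointwise question about which units of a common sequence each strategy captures.

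\textbf{Case analysis.} If the strategic order is ${\sf Buy}(X)$-type with cap $r' < r^*$, it terminates earlier than honest and forgoes units with $r_i \in (r', r^*]$ of positive marginal utility; if $r' > r^*$, it acquires extra units at $r_i > r^*$ of negative marginal utility; if $r' = r^*$ the two coincide modulo the $y$-budget, which clamps both identically under no-short-selling. If the strategic order is instead ${\sf Sell}(X)$-type, it runs against the honest direction: as long as the batch stays ${\sf Buy}(X)/{\sf Sell}(Y)$-dominant, every sold unit is matched in Phase~1 at rate $r_0 \leq r^*$ for a per-unit contribution $r_0 - r^* \leq 0$, while the fact that user~$u$ now has no ${\sf Buy}(X)$-side order means none of the positive-utility units from the honest execution are recovered; hence $U_S \leq 0 \leq U_H$.

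\textbf{Main obstacle.} The principal difficulty is the sub-case where the strategic ${\sf Sell}(X)$ is large enough to flip the batch from ${\sf Buy}(X)/{\sf Sell}(Y)$-dominant to ${\sf Sell}(X)/{\sf Buy}(Y)$-dominant, thereby restructuring Phase~1 and Phase~2 for every order and breaking the naive coupling above. I plan to handle this in the spirit of \Cref{claim:one}: split the strategic sale into a flipping portion that cancels against honest buy-side orders at rate $r_0$ and a residual portion executed in Phase~2 at rates $\leq r_0$, then show that both portions yield non-positive marginal utility by \Cref{fact:inc-marginal-cost} and \Cref{fact:no_free_lunch}. Finally, since the total ordering refines the partial ordering of \Cref{sec:rank}, every pair already ranked in favour of honest by the weak-mechanism proof of \Cref{lemma:single_strategic} inherits that ranking here; the new work is concentrated exactly on the pairs that were previously incomparable, which is precisely where the linear utility argument above applies.
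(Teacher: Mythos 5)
Your proposal follows essentially the same route as the paper's proof: reduce to a single primary-asset order at the true arrival time via \Cref{lem:coalesce-noshort} and \Cref{fact:prime_asset}, then compare the linear utility $r^*\,\delta x+\delta y$ (which is exactly the total ordering) case by case using \Cref{fact:inc-marginal-cost} and \Cref{fact:no_free_lunch}; working with the ${\sf Sell}(Y)$ type instead of ${\sf Sell}(X)$ is just the mirrored presentation. The real difference is how the opposite-direction deviation is handled, and there your ``main obstacle'' is not actually an obstacle. The paper dispatches that case with a global sign argument that never needs a coupling and is insensitive to which side ends up dominant: for your honest ${\sf Sell}(Y)$ user, $r^*\ge r_0$ and every unit the honest order trades clears at a rate in $[r_0,r^*]$, so $U_H\ge 0$; conversely a strategic ${\sf Sell}(X)$ order clears every unit at exactly $r_0$ in Phase~1 and, if the batch flips to ${\sf Sell}(X)$-dominant, at rates at most $r_0$ in Phase~2 (rates only fall below $r_0$ as net selling proceeds), so $U_S\le 0\le U_H$ outright --- no \Cref{claim:one}-style splitting is required, and your sketched plan, while it would work, just reconstructs this simpler observation. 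Two points you should tighten when writing this up: first, your coupling for same-direction deviations is stated ``as long as the batch's dominance is unchanged,'' but a same-direction deviation with a smaller amount can also flip the dominance test (which is computed from safe-executed volumes at $r_0$); the conclusion still holds --- in the flipped execution the strategic units all clear at $r_0$, and the honest execution gives the user at least that many units at $r_0$ plus further units at rates no worse than $r^*$ --- but this needs an explicit argument rather than being absorbed into the coupling. Second, your same-direction case analysis is organized by the cap $r'$ alone; under-reporting the amount is also a deviation and should be stated as the forgone-units case (your budget-clamp remark only covers over-reporting), which is how the paper's Case~2a/2b split handles it.
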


\begin{proof}
The structure of the proof is similar to~\Cref{lemma:single_strategic}.
However, since the original partial order
is refined to a total order,
some pair of incomparable outcomes in the previous proof
will now become comparable.
We prove it for the case when user $u$'s type is  
$({\sf Sell}(X), \amt^*, r^*, \aux^*)$.
The case for ${\sf Sell}(Y)$ is symmetric.

Due to \Cref{lem:coalesce-noshort} and \cref{fact:prime_asset}, we may assume
that the strategic order must be 
of the type either 
${\sf Buy}(X)$ or ${\sf Sell}(X)$, with a true time of arrival $\aux^*$.
Henceforth, 
let $(\delta x, \delta y)$ and $(\delta x', \delta y')$ denote the honest and strategic outcomes, respectively.

\myparagraph{Case 1:} 
Suppose the strategic order is of the type ${\sf Buy}(X)$.
Observe that the honest order
will always generate an outcome $(\delta x, \delta y)$
such that $r^* \cdot \delta x + \delta y \geq 0$.
Hence, it suffices to argue that
the outcome $(\delta x', \delta y')$
generated by any ${\sf Buy}(X)$ 
must have $r^* \cdot \delta x' + \delta y' \leq 0$.
Note that $\delta x' \geq 0$, we have $\delta y' \leq 0$
because of no free lunch fact (\Cref{fact:no_free_lunch}).

 Since the user's true type is ${\sf Sell}(X)$,
it means the initial market rate $r_0$ satisfies $r^* \leq r_0$.
Observe that in our mechanism, a ${\sf Buy}(X)$-type order can only be executed at a rate $r_0$ or larger.
This means that $-\delta y' \geq r_0\cdot \delta x'$.
Thus, it must be that  $r^* \cdot \delta x' + \delta y' \leq 0$ since $r^* \leq r_0$.

Therefore, we can conclude that the honest outcome is
at least as good as the strategic outcome.

\myparagraph{Case 2:} 
The strategic order 
is in the same direction of ${\sf Sell}(X)$.
Because of~\Cref{fact:prime_asset},
we may assume that the strategic order is also of type ${\sf Sell}(X)$.
Since the mechanism guarantees no short-selling, for any strategic ${\sf Sell}(X)$ order, it must be that
$\delta x'\geq -\amt^*$ and $\delta x \geq -\amt^*$.
We consider the following cases. 
\begin{itemize} 
\item 
{\it Case 2a:} $0 \geq \delta x' \geq \delta x$. 
Since the 
strategic order declares the same arrival time as the honest one by its admissibility,
if the orders from both strategies get safe-executed
for a non-zero amount, both executions will start at the same market exchange rate. 

Therefore, for selling the initial $|\delta x'|$ units of
$X$, the two strategies are equivalent.
The honest strategy sells an additional 
$|\delta x - \delta x'|$ units of $X$ at rates of at least $r^*$,
i.e., $\delta y - \delta y' \geq r^* \cdot (\delta x' - \delta x)$.
This means that
$r^* \cdot \delta x + \delta y =
(r^* \cdot \delta x' + \delta y') + r^* \cdot (\delta x - \delta x')
+ (\delta y - \delta y') \geq r^* \cdot \delta x' + \delta y'$.
This implies that the honest outcome is
at least as good as the strategic outcome.

\item  {\it Case 2b:}
$0 \geq \delta x > \delta x'$. 
Because of no short selling,
the strategic outcome must also satisfy $\delta x' \geq -\amt^*$.

In this case, the honest outcome has not reached the goal $-\amt^*$.
Under the honest strategy, 
after user $u$'s order has been safe-executed, 
the state of the market is such that
if a further non-zero portion of the order is executed,
this portion will incur an average rate of strictly less than $r^*$.

Hence, for the strategic order, 
the difference of $|\delta x' - \delta x| > 0$
units must be traded at an average rate strictly worse less
than $r^*$,
i.e., $\delta y' - \delta y < r^* \cdot (\delta x - \delta x')$.
In other words,
$r^* \cdot (\delta x' - \delta x)
+ (\delta y' - \delta y) < 0$,
i.e., the strategic outcome is strictly worse than the honest outcome.

\end{itemize}

\end{proof}

\end{document}